\newcounter{examplecounter}
\newenvironment{example}[1]{\begin{quote}%
    \refstepcounter{examplecounter}%
  \textbf{Example \arabic{examplecounter}: #1}%
  \\
}{%
\end{quote}}
\newcounter{exercisecounter}
\definecolor{ttqqqq}{rgb}{0.2,0.,0.}
\definecolor{cqcqcq}{rgb}{0.7529411764705882,0.7529411764705882,0.7529411764705882}
\definecolor{zzttqq}{rgb}{0.6,0.2,0.}
\definecolor{qqqqff}{rgb}{0.,0.,1.}
\definecolor{qqzzqq}{rgb}{0.,0.6,0.}
\definecolor{zzttqq}{rgb}{0.6,0.2,0.}
\newtheorem{theorem}{Theorem}
\newtheorem{lemma}[theorem]{Lemma}
\newtheorem{proposition}[theorem]{Proposition}
\newtheorem{corollary}[theorem]{Corollary}
\newtheorem{definition}[theorem]{Definition}
\author{Teun \textsc{Janssen}, C\'eline \textsc{Swennenhuis}, Abdoul \textsc{Bitar},  \\ 
 Thomas \textsc{Bosman}, Dion \textsc{Gijswijt}, Leo van \textsc{Iersel}, \\
 St\`ephane \textsc{Dauz\`ere-P\'er\`es},
Claude \textsc{Yugma}}
\title{Parallel Machine Scheduling with a Single Resource per Job}
\begin{document}

\maketitle 
\begin{abstract}
We study the problem of scheduling jobs on parallel machines minimizing the total completion time, with each job using exactly one resource. First, we derive fundamental properties of the problem and show that the problem is polynomially solvable if $p_j = 1$. Then we look at a variant of the shortest processing time rule as an approximation algorithm for the problem and show that it gives at least a $(2-\frac{1}{m})$-approximation. Subsequently, we show that, although the complexity of the problem remains open, three related problems are $\mathcal{NP}$-hard. In the first problem, every resource also has a subset of machines on which it can be used. In the second problem, once a resource has been used on a machine it cannot be used on any other machine, hence all jobs using the same resource need to be scheduled on the same machine. In the third problem, every job needs exactly two resources instead of just one. 
\end{abstract}

\section{Introduction}
In this paper, we study a variant of the problem of scheduling jobs on parallel machines with the objective to minimize the total completion time, i.e. the sum of the completion times of all jobs. In this variant, each job uses exactly one resource, thus partitioning the jobs. There is only one unit of each resource available at any time, so jobs using the same resource cannot be processed simultaneously. Using the classification of \citet{graham1979optimization}, we will denote the problem as $P|\textit{partition}|\sum_j C_j$. In this classification, scheduling problems are classified by parameters $\alpha|\beta|\gamma$ where the $\alpha$-field describes the machine environment, the $\beta$-field indicates job characteristics and the $\gamma$-field reflects the optimality criteria.\\

The problem is motivated by a scheduling problem found in the semiconductor industry. The wafer, which contains the chips, will visit different production bays multiple times during its production cycle. The expensive photolithography pieces of equipment are often the bottleneck of the production line. Hence, the overall performance of the factory can be improved by raising the equipment throughput on these tools (which is achieved by minimizing $\sum_j C_j$). Photolithography is a process to transfer the geometric pattern of a chip-design onto a wafer.  This is done by putting light through a reticle onto the production wafer. This reticle contains the geometrical pattern of the  computer chip. Thus, when trying to schedule jobs in the photolithography bay, we need to make sure the reticle (the resource) is available when a job is processed. \\

In the scheduling problem $P||\sum_j C_j$, one wants to minimize the total completion time while scheduling on a set of parallel machines. It is well-known from the literature that $P||\sum_j C_j$ is polynomially solvable by using the shortest processing time first (SPT) order on the earliest available machine (\citet{conway1967}). This rule makes sure that every time a machine finishes a job, it will be assigned, from among the jobs waiting, the job with a shortest processing time. \\

Our problem adds auxiliary resource constraints. \citet{Blaz} describe the resource requirements with the entry $\text{res} \lambda\delta\rho$ in the $\beta$ field of the scheduling problem. The number of different resources is given by $\lambda \in \{ \cdot , c_\lambda\}$. If $\lambda=c_\lambda$, the number of resources is given by $c_\lambda$. If $\lambda=\cdot$ it is part of the input. The resource capacities are denoted by $\delta=\{\cdot,c_\delta \}$. If $\delta=c_\delta$, there is exactly $c_\delta$ of every resource available. If $\delta=\cdot$, the total amount available of a resource is part of the input. The resource requirements per job are denoted by $\rho=\{\cdot,c_\rho \}$. If $\rho=c_\rho$, every job needs exactly $c_\rho$ of a resource it requires. If $\rho=\cdot$, the amount required is part of the input.\\

The type $\text{res}\cdot11$ implies that per resource type, there is one resource available at any given time and this resource will be used entirely if a job needing this resource is processed. This implies that jobs that share the same resource cannot be processed simultaneously. When only $\text{res}\cdot 11$ is in the $\beta$ field of the scheduling problem, a job can need any number of resources. This does not capture that every job in the lithography bay needs only one resource, the reticle. Therefore, we indicate the problem within this paper by \textit{partition} in the $\beta$ field of the scheduling problem. Hence, \textit{partition} is a special case of $\text{res}\cdot11$ resources.\\ 
We know from \citet{Blaz} that if the number of machines is $m\geq 3$ and there are no further restrictions on the $P|\text{res}\cdot11, p_j=1|\sum_j C_j$, the problem is $\mathcal{NP}$-hard. The proof is based on a reduction from partition into triangles and uses multiple resources per job. It is proven by \citet{garey1975complexity} that $P2|\text{res}\cdot 11, p_j=1|\sum_j C_j$ can be solved in polynomial time by a reduction to the matching problem.\\ 

The problem can also be viewed as a special case of $PD|\text{res } 1 \cdot 1 |\sum_j C_j$. In $PD|\text{res } 1 \cdot 1 |\sum_j C_j$, we have dedicated machines and are given only one resource of a certain quantity $c_\delta$ and every job needs exactly $1$ from that resource. We can rewrite $P|\textit{partition}|\sum_j C_j$ to $PD|\text{res } 1 \cdot 1 |\sum_j C_j$ by taking $c_\delta$ equal to the number of machines and introducing a dedicated machine for every resource, such that all jobs that share a resource have to be processed on the same machine. 
One could also view $P|\textit{partition}|\sum_j C_j$ as a special version of scheduling with conflicts. In scheduling with conflicts, we have again parallel machines, the total completion time objective and jobs cannot be processed at the same time if they share an edge in the conflict graph $G=(J,E)$ with an edge between two jobs if they share the same resource. In our problem, $G$ is a collection of cliques.  \\

Our contribution is as follows. First we prove that allowing preemptions to the problem, does not change the problem. Using that, we conclude that in each optimal schedule for $P|\textit{partition}|\sum_j C_j$, all jobs sharing the same resource must be processed in order of processing time. Restricting the problem to $p_j=1$ is proven to be polynomially solvable. Thereafter we look at an approximation algorithm for $P|\textit{partition}|\sum_j C_j$, based on a variant of the shortest processing time (SPT) rule that takes the partition constraints into account. We prove that it gives a $(2-\frac{1}{m})$-approximation and show that it cannot give an $\alpha$-approximation with $\alpha < \frac{4}{3}$. In the last three sections we look at three related problems and show that they are $\mathcal{NP}$-hard. The first problem has additional processing set restrictions for resources, meaning each resource has a set $\mathcal{M}_r$ of machines on which it can be used. From this, we can also conclude that the problem with unrelated machines, i.e. $R|\textit{partition}|\sum_j C_j$, is also $\mathcal{NP}$-hard. This is the situation of the scheduling of Photolithography machines in practice. The second related problem assumes that resources are unmovable, meaning that once a resource is used on a machine, it can thereafter only be used on that specific machine. In the last related problem, each job has at most $q$ resources with $q\ge2$ a constant.\\

\section{Definitions}
Before we begin our analysis of the problem and its solutions, we first formally define \emph{partition} as part of the $\beta$ field and some intermediary concepts. We have $m$ identical machines and let $J$ be the set of jobs that are to be scheduled. Each $j \in J$ has a \emph{processing time} $p_j$ and each machine can only process a single job at a time. We will denote $C_j$ as the \emph{completion time} of job $j$ in a feasible schedule for an instance. We want to minimize the sum of the completion times (total completion time). 

\begin{definition}
If \emph{partition} is in the $\beta$ field, there is a partition $R$ of $J$, i.e., there is a collection of subsets $R = \{r^1,\ldots,r^s\}$ with $r^k \subseteq J$, where every job is contained in exactly one of the subsets. If $j,j' \in r^k$, $j$ and $j'$ cannot be processed at the same time. Furthermore, we want to define which resource is used by which job. Let $r_j=\{ r^k \in R \ | \ j \in r^k \}$, i.e., all subsets that contain job $j$. If two jobs share the same resource, we will denote this by $r_j=r_{j'}$, which implies that $r_j \cap r_{j'} \neq \emptyset $.\\
\end{definition}

When we look at a job, we will often consider the other jobs that share the same resource. We will, therefore, introduce the concept of slack, which, intuitively, is the amount of time before and after the job that its resource is not being used.

\begin{definition}
A job $j$ has \emph{positive slack} $d^{+}\geq 0$, which is the largest non negative number, such that in a given schedule all jobs $j'\in J$ which have the same resource and start after job $j$, start at least $d^{+}$ time units after $j$ finishes. More formally, we define $d^{+}$ as
\[d^{+} := \min\left\{ C_{j'}-p_{j'}-C_j | j' \in J \text{ satisfies } r_{j'} = r_j \text{ and } C_{j'} > C_j \right\} \]
where we define $+\infty$ as the minimum over the empty set. \\ 
Similarly, a job $j$ has \emph{negative slack} $d^{-}\leq 0$, which is the largest non negative number, such that all jobs $j'\in J$ which have the same resource and start before job $j$, finish at least $d^{-}$ time units before $j$ starts.  More formally, 
\[d^{-} := \min\left\{ C_{j}-p_{j}-C_{j'} | j' \in J \text{ satisfies } r_{j'} = r_j \text{ and } C_{j'} < C_j \right\} \]
where we define $+\infty$ as the minimum over the empty set. \\
The \emph{slack} $d> 0$ of a job $j$, we define as $d=\min\{d^{+},d^{-}\}$.
\end{definition}

We defined the slack of a job by considering all jobs that share the same resource, but often we are only interested in the last job before and the first job after a job $j$ that use the same resource. We therefore introduce the following concept.

\begin{definition}
Let $d^{+}$ be the positive slack of $j$, we call a job pair $(j,j')$ a \emph{blocking pair} if  $r_j = r_{j'}$ and $C_j = C_{j'}-p_{j'}-d^{+}$. Thus, $j'$ is the first job to start after $C_j$ that uses the same resource as job $j$. A blocking pair $(j,j')$ is \emph{tight} if $d^{+}=0$.
\end{definition}

Given a tight pair where the two jobs are not on the same machine, it can be advantageous to construct a schedule were they actually are on the same machine. We will call this operation `untangling'. Before we define it properly however we need to first define a job's suffix. 

\begin{definition}
Let job $j$ be processed on machine $i$. The suffix of job $j$, $\mathcal{S}(j)$, is the set of jobs $j' \in J$ that are also processed on machine $i$ with $C_{j'}\ge C_j$ (excluding job $j$). 
\end{definition}

\begin{definition}
Given a tight pair $(j,j')$, let $i$ be the machine on which $j$ is processed and $i'$ be the machine on which $j'$ is processed. \emph{Untangling} $(j,j')$ is the operation that changes the schedule by swapping suffices between the machines $i$ and $i'$, i.e., we move $j'$ and $S(j')$ to machine $i$ and $S(j)$ to machine $i'$. 
\end{definition}
Since we work with parallel machines, untangling will not change any of the start or completion times of the jobs. Hence, it will not create any resource conflicts and the objective function remains the same.\\ 

\section{Problem Properties}
In this section, we consider the structure of optimal solutions. We show that there is no non-trivial idle time in an optimal solution and that given a resource, the jobs using that resource are scheduled from shortest to longest. We will continue by looking at the complexity of the problem. Whether or not $P|\textit{partition}|\sum_j C_j$ is $\mathcal{NP}$-hard remains an open problem, but we can show that when $p_j=1$ the problem is polynomially solvable. We also show that the problem with preemptions is equal to the problem without preemptions.\\  

We first note that if $|R|<m$ the problem becomes trivial. In that case, one will put all jobs which use the same resource in shortest processing time order on one machine. 
We continue by looking at idle times in a solution. 

\begin{lemma} \label{idletime}
For every instance of $P|\textit{partition}|\sum_j C_j$ there exists an optimal solution that contains no idle times.
\end{lemma}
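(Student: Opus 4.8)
The plan is to reason about the \emph{busyness profile} of a schedule, $n(t) := \left|\{\, j \in J : C_j - p_j \le t < C_j \,\}\right|$, the number of jobs being processed at time $t$, and to split the argument in two: first, that every optimal schedule has a non-increasing profile; second, that any schedule whose profile is non-increasing can be realized, with exactly the same completion times, by machines that run without gaps. Since untangling and single-job left shifts never change any other job's completion time, both steps keep us inside the class of optimal schedules. Here ``no idle time'' is read as no gaps and no leading idle on any machine (empty machines and trailing time are not counted).

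For the first step I would argue by a local exchange. Note $n(t)\le m$ always. Suppose $n$ strictly increases at some time $\tau$, so strictly more jobs start at $\tau$ than finish at $\tau$; then $n(\tau^-) < n(\tau^+) \le m$, so at least one machine is idle in a left-neighbourhood of $\tau$. Take any job $j'$ starting at $\tau$. If $j'$ could be started strictly earlier, on its own machine or on a machine idle just before $\tau$, then doing so strictly decreases $C_{j'}$ while changing no other completion time, contradicting optimality. Hence every such $j'$ is blocked, and since a machine is free immediately before $\tau$, the only possible obstruction is the resource: there is a job $j$ with $r_j = r_{j'}$ and $C_j = \tau$. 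The key counting step is that the assignment $j' \mapsto j$ is injective: if two starters $j'_1, j'_2$ at $\tau$ shared a blocker $j$, they would share its resource and both occupy $[\tau, \cdot)$, violating feasibility. Therefore the number of jobs starting at $\tau$ is at most the number finishing at $\tau$, so $n(\tau^+) \le n(\tau^-)$, contradicting the increase. (If $n(\tau^-) = m$ there is no room to shift and nothing to prove, as $n(\tau^+) \le m$.)

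For the second step I would reassign machines while keeping every interval $[C_j - p_j,\, C_j]$ fixed, so all completion times, and hence $\sum_j C_j$, are unchanged. In an optimal schedule the earliest start time is $0$ (else shift the first job left). Scanning the start/finish events in increasing time order, I assign each job, at the instant it starts, to a machine that has just finished a job at that same instant; the jobs present at time $0$ go on that many fresh machines. Because the profile is non-increasing, at every event time the number of jobs starting is at most the number finishing, so a just-freed machine is always available. Each machine then processes its jobs contiguously from time $0$ (unused machines stay empty), giving a gap-free schedule with no leading idle that preserves all completion times and is therefore optimal.

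The main obstacle is making the exchange in the second paragraph fully rigorous, in particular justifying that a non-shiftable starter forces a same-resource job to complete \emph{exactly} at $\tau$: one must show the latest same-resource predecessor of $j'$ finishes at $\tau$, using that the relevant machine is free right before $\tau$, and then carry the injective count cleanly through jobs that start and finish simultaneously at a common event time. A secondary technical point is the bookkeeping in the reassignment when several jobs start and finish at the same instant; this is exactly where the identical-machine assumption is used, and the reassignment can equivalently be presented as a sequence of the untangling operations defined above, each of which leaves every completion time unchanged.
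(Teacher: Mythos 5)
Your argument is correct, but it is organized quite differently from the paper's. The paper untangles all tight pairs up front and then observes that the job following the \emph{last} remaining idle interval cannot be tightly blocked (its blocker would now sit on the same machine, filling the gap), so it can be shifted left, contradicting optimality --- a one-shot, local argument. You instead decompose the claim into (i) the busyness profile $n(t)$ of every optimal schedule is non-increasing for $t>0$, proved by the left-shift exchange plus the injective map from starters to same-resource finishers at $\tau$, and (ii) any non-increasing profile can be realized gap-free by a greedy machine reassignment that preserves all intervals $[C_j-p_j,C_j]$. The shared core is the same exchange (a non-shiftable starter forces a same-resource job to finish exactly at $\tau$, and suffix swaps/untangling let you move a starter onto an idle machine without touching completion times), but your counting step and the profile/realization split are genuinely new relative to the paper. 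What you gain is a stronger, machine-assignment-independent structural fact (the number of busy machines never increases over time in any optimal schedule) and a clean separation of timing from assignment; what you pay is length and two technical debts you correctly flag: the move ``start $j'$ earlier on a machine idle just before $\tau$'' must be implemented as a suffix swap at $\tau$ (you cannot simply relocate $j'$, since the idle machine may have its own job starting at $\tau$), and the monotonicity claim must exclude $\tau=0$, where the jump of $n$ from $0$ to $n(0^+)$ is legitimate and is instead absorbed by placing the time-$0$ starters on fresh machines. Both are routine to repair, so the proof stands.
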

\begin{proof}
Suppose that, in an optimal schedule  with idle times. We begin by untangling all tight pairs. If an idle time remains, we consider the last idle time, which appears on machine $i$ that starts on time $t_1$ and ends at time $t_2$. Since we have untangled all tight pairs, the resource used by the job on machine $i$ starting at time $t_2$ is not used until time $t_2$. Hence, we can start the processing of this job earlier. We can then schedule the job either at time $t_1$ or at the last time before $t_2$ its resource was used. This would reduce the completion time of this job. Therefore, after untangling, there cannot be any idle times. Since untangling does not change completion times there is an optimal schedule without idle times. 
\end{proof}
In the proof, we saw it is easy to turn an arbitrary optimal schedule into a schedule where tight pairs $(j,j')$ are processed on the same machine. We call such a schedule a \emph{tight} schedule.
\begin{definition}
A \emph{tight schedule} is a schedule without any idle time and in which each tight blocking pair $(j,j')$ is executed on the same machine. 
\end{definition}
Notice that untangling results in jobs using the same resource being processed one after another on the same machine. We will call these job sequences \emph{trains}. 
\begin{definition}
A \emph{train sequence} $T(j_1)$ in a schedule is a maximal sequence of consecutively jobs $j_1,j_2,...,j_c$ on the same machine using the same resource.
\end{definition}
Notice that a tight schedule only consists of train sequences $T(j_k)$ with nonzero slack between the train sequences of the same resource, where the $j_k$ are the first jobs to be scheduled when a machine changes resource. \\  

We continue by looking at the order in which jobs that use the same resource are processed. We will prove that this is from shortest to longest processing time. We will prove this by first looking at the problem with preemptions, notated by $P|\textit{partition}, \textit{prmp}|\sum_j C_j$. In a preemptive schedule, the total amount of processing done on the job needs to be equal to its processing time ($p_j$), but jobs can be interrupted at any time and the processing done is not lost. A job can thus be split into multiple parts, possibly processed on different machines. We begin by defining these more precisely. 
\begin{definition}
A \emph{job part} $j^l$ is the $l$th maximal part of the job $j$ that is processed without interruption on a single machine with positive length. The superscript $l$ will be omitted when it is of no importance. A pair of job parts $(j,j')$ is called a \emph{blocking pair} if  $r_j = r_{j'}$ and $j'$ is the first job part to start after $j$ that uses the same resource as job part $j$. A blocking pair $(j,j')$ is \emph{tight} if job part $j'$ starts at the time $j$ ends, i.e., $d^{+}=0$.
\end{definition}
The definitions for blocking pairs, slack, suffix, train sequences, tight schedules and untangling can easily be extended to the case of job parts.  

\begin{lemma} \label{SPTorderPRMP}
In an optimal schedule for $P|\textit{partition}, \textit{prmp}|\sum_j C_j$  all jobs sharing the same resource must be processed in SPT-order, i.e., if job $j$ and $j'$ both use resource $r \in R$ and $p_j < p_{j'}$ then $C_j < C_{j'}$. Furthermore, if $C_j < C_{j'}$, all job parts of $j$ will be processed before any job parts of $j'$. 
\end{lemma}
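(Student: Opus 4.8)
The plan is to use an exchange argument built on a single structural observation: because $j$ and $j'$ share resource $r$, they can never be processed simultaneously, so at every instant at most one of them is running, and hence at most one machine is occupied by the pair. I would therefore consider the \emph{combined footprint} of $j$ and $j'$, the set of time intervals (each tagged with a machine) during which either job is processed. This footprint is fixed once we fix the original schedule, and reassigning its mass between $j$ and $j'$ — giving $j$ a total length of $p_j$ and $j'$ a total length of $p_{j'}$, with the two allocations kept disjoint in time — always yields a feasible schedule with identical completion times for every other job. Feasibility holds because the union of occupied instants is unchanged, so resource $r$ is busy during exactly the same times (no new conflict with other jobs using $r$), each machine's busy pattern is unchanged, and the one-machine-at-a-time property rules out a job being run on two machines at once.

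Next I would set $E$ to be the right endpoint of the footprint and let $\tau(x)$ denote the time at which the cumulative occupied length of the footprint first reaches $x$. Two facts drive the argument: first, $\max\{C_j, C_{j'}\} = E$, since whichever of the two jobs runs last in the footprint completes exactly at $E$; and second, any feasible allocation satisfies $C_j \ge \tau(p_j)$ and $C_{j'} \ge \tau(p_{j'})$, because a job cannot complete before it has accumulated its full processing mass. Since $p_j < p_{j'}$ forces $\tau(p_j) < \tau(p_{j'})$, the allocation that gives $j$ the earliest $p_j$ of the footprint achieves $C_j = \tau(p_j)$ and $C_{j'} = E$, with pair-contribution $\tau(p_j) + E$.

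I would then settle the ordering claim by contradiction. If the optimal schedule had $C_{j'} < C_j$, then $C_j = E$ and the pair contributes $E + C_{j'} \ge E + \tau(p_{j'}) > E + \tau(p_j)$, so the reallocation above strictly improves the objective — impossible. Since $j$ and $j'$ share a resource they cannot complete at the same instant, so $C_j < C_{j'}$, which is the SPT order. For the \emph{furthermore} statement I would push the same inequality one step further: optimality forces the pair-contribution down to $\tau(p_j) + E$, and as $C_{j'} = E$ this gives $C_j = \tau(p_j)$; but the footprint mass accumulated by time $\tau(p_j)$ equals $p_j$, and all of $j$'s mass lies at or before $C_j = \tau(p_j)$, so $j$ fills that initial portion entirely and $j'$ has no processing before $\tau(p_j) = C_j$. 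Hence every job part of $j$ precedes every job part of $j'$.

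I expect the main obstacle to be the bookkeeping that makes the reassignment rigorously feasible in the preemptive, multi-machine setting: one must argue carefully that the combined footprint occupies at most one machine at any instant — this is exactly where the shared-resource hypothesis is used — and that splitting intervals across the $j$/$j'$ boundary never forces a job onto two machines simultaneously nor perturbs any other job's schedule. Once that observation is established, the optimization over allocations of a fixed set of intervals is routine.
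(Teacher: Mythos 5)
Your proof is correct, and it is a genuine variant of the paper's argument rather than a copy of it. Both proofs rest on the same key observation -- that under preemption the time slots occupied by a resource form a ``single-track'' footprint (at most one machine busy with that resource at any instant) which can be freely refilled without disturbing any other job -- but you organize the exchange differently. The paper removes \emph{all} job parts using resource $r$ at once and refills the entire footprint in SPT order, then compares prefix sums of processing times to conclude that every completion time weakly decreases and at least one strictly decreases. You instead perform a \emph{pairwise} exchange on the two offending jobs only, introducing the cumulative-footprint inverse $\tau(\cdot)$ and minimizing the pair contribution $C_j + C_{j'} = E + \min\{C_j, C_{j'}\} \ge E + \tau(\min\{p_j,p_{j'}\})$. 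The paper's global re-sort establishes the full SPT order of the whole resource class in one shot; your pairwise version requires only local bookkeeping and, notably, handles the ``furthermore'' clause more explicitly than the paper does -- your observation that optimality forces $C_j = \tau(p_j)$, so that $j$ must fill the entire initial $p_j$ of the footprint and no part of $j'$ can precede $C_j$, is a cleaner justification of the non-interleaving claim than anything written in the paper's proof. The feasibility concerns you flag (one machine at a time within the footprint, no new conflicts with other $r$-jobs, other jobs untouched) are exactly the right ones and are all correctly discharged by the shared-resource hypothesis.
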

\begin{proof}

Suppose we have an optimal schedule $S$ where there is not the case. Then there is a resource $r\in R$ and two jobs using this resource ($r_j = r_{j'}$), job $j$ and job $j'$, with $C_j < C_{j'}$ and $p_j>p_{j'}$. From $S$ we get an ordering of the jobs using resource $r$. Let $j_{S(r,p)}$ denote the $l$th job finishing in $S$ using resource $r$. \\ 
Create a new schedule $S'$ which is identical to $S$ except for all jobs using resource $r$. We remove from $S$ all job parts using resource $r$. This will remove $t=\sum_{j\in J | r_j =r} p_j$ units of processing from the schedule. We fill these units of processing again with the jobs using resource $r$ but now we process them in an SPT-order. We processes the first job in the ordering $j_{S'(r,1)}$ in the first $p_{j_{S'(r,1)}}$ units of $t$. We schedule the second job in the ordering $j_{S'(r,2)}$ in the first $p_{j_{S'(r,2)}}$ units of $t$ after $C_{j_{S'(r,1)}}$ and so on until all jobs using resource $r$ are scheduled. The resource $r$ will be used in the same time as in $S$ by only a single job, hence $S'$ is a feasible schedule. Furthermore, it holds that $C_{j_{S(r,1)}} \leq C_{j_{S(r,p)}} \forall p$, since 
\begin{equation}
p_{j_{S'(r,1)}}+p_{j_{S'(r,2)}}+\ldots+ p_{j_{S'(r,p)}} \leq p_{j_{S(r,1)}}+p_{j_{S(r,2)}}+\ldots+ p_{j_{S(r,p)}}. \label{eq1_SPTorderPRMP}
\end{equation}
Since $p_j>p_{j'}$, equation (\ref{eq1_SPTorderPRMP}) is satisfied with inequality for the $l$ job and $S$ cannot be optimal. 
\end{proof}

\begin{theorem} \label{nopreempt}
There is an optimal schedule for $P | \textit{partition}, \textit{prmp}| \sum_j C_j$ without any preemptions.
\end{theorem}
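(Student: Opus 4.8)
The plan is to reduce the statement to an equality of optimal values and then construct, from an optimal preemptive schedule, a non-preemptive one of no larger cost. Since every non-preemptive schedule is in particular a preemptive schedule, the optimal non-preemptive value is at least the optimal preemptive value; hence it suffices to prove the reverse inequality, namely that from the optimal preemptive schedule one can produce a non-preemptive schedule whose total completion time is at most as large. Once the two optimal values coincide, any non-preemptive schedule attaining the non-preemptive optimum is automatically an optimal preemptive schedule that happens to use no preemptions, which is exactly the claim.

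First I would put the optimal preemptive schedule into a normal form using the two preceding lemmas. By Lemma~\ref{idletime} (applied to job parts) I may assume the schedule has no idle time and that all tight pairs of job parts have been untangled, so that it decomposes into train sequences. By Lemma~\ref{SPTorderPRMP}, for every resource $r$ the jobs using $r$ occur in SPT order and, crucially, are \emph{temporally separated}: all parts of an earlier job of $r$ finish before any part of a later job of $r$ begins. Consequently, each job $j$ has its entire processing confined to a window $[\sigma_j, C_j]$ during which the resource $r_j$ is used by $j$ alone, and during the gaps of this window the machines are busy only with jobs of other resources. Because the machines are identical, I may freely reassign which machine runs which suffix (untangling), and by the observation following the definition of untangling this never changes a start or completion time; therefore every rearrangement performed below is \emph{free} with respect to the objective, and the only thing that must be checked is that it creates no resource conflict.

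The core of the proof is then an exchange argument that removes preemptions one at a time. Among all optimal preemptive schedules in the above normal form I would select one with the fewest job parts and, for contradiction, assume some job $j$ is still preempted. I would focus on the last part $P=[s,C_j]$ of such a $j$ and attempt to merge it with the preceding part of $j$ by a local swap of suffixes on the two machines involved. Since identical-machine swaps preserve all completion times, this does not increase $\sum_j C_j$; if the swap is feasible it strictly decreases the number of job parts, contradicting minimality and finishing the proof.

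The main obstacle is precisely the feasibility of this merge: consolidating the parts of $j$ forces the jobs occupying the vacated slots to move, and although on identical machines such displaced jobs only ever move to earlier (hence equally good) positions, one must guarantee that no displaced job is pushed on top of another part of its own resource. This is where the temporal-separation property of Lemma~\ref{SPTorderPRMP} and the absence of idle time from Lemma~\ref{idletime} do the real work: together they constrain where each resource may appear inside the window $[\sigma_j, C_j]$, and by choosing the preemption to eliminate carefully (for instance the last part of a preempted job whose completion time is largest) one can arrange the swap so that every shifted job lands in a slot where its resource is free. Verifying that such a conflict-free choice always exists is the step I expect to require the most care.
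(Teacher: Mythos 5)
Your overall strategy---take an optimal preemptive schedule with the fewest preemptions, normalize it via Lemma~\ref{idletime} and Lemma~\ref{SPTorderPRMP}, and eliminate the last preemption by a local exchange---is the same skeleton as the paper's proof of Theorem~\ref{nopreempt}. But the mechanism you propose for the exchange does not work, and the part you defer (``verifying that such a conflict-free choice always exists'') is not a technicality: it is where essentially all of the content of the proof lives. Your plan is to merge the last part of a preempted job $j$ with its preceding part by ``a local swap of suffixes on the two machines involved,'' arguing that such swaps preserve all completion times and hence are free. That is true of untangling, but untangling can only exchange suffixes at a common time point; it can never merge two parts of $j$, because in a tight schedule the part $j^l$ ends at some time $t_1$ and the part $j^{l+1}$ begins at a strictly later time $t_2 > t_1$ (if $t_2 = t_1$ the pair would be tight and already untangled, removing the preemption). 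To close the temporal gap $[t_1,t_2]$ you must move processing in time, and that necessarily changes completion times of other jobs---so the claim that the operation ``does not increase $\sum_j C_j$'' cannot be taken for granted and is in fact false for an arbitrary choice of shift direction.

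The paper resolves exactly this difficulty: after ruling out $i=i'$ and the case $C_{j'}\le t_2$ by direct improvements, it is left with the configuration of Figure~\ref{situation} and compares the cardinalities of the two suffixes $S(j)$ and $S(j')$. Shifting one suffix forward and the other backward by $\epsilon$ changes the objective by $\epsilon\bigl(|S(j)|-|S(j')|\bigr)$ or its negative, so optimality forces $|S(j)|=|S(j')|$; in that balanced case the shift is cost-neutral but creates a new tight pair, and one must iterate shift-then-untangle and prove termination (bounded number of tight pairs, no new job parts created) before the preemption $(j^l,j^{l+1})$ itself becomes tight and is removed. None of this is a ``displaced jobs only move earlier'' situation---in the case $|S(j)|<|S(j')|$ the suffix $S(j)$ is deliberately moved \emph{later}. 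So your proposal is missing the counting argument on suffix sizes, the treatment of the balanced case, and the termination argument; as written, the merge step you rely on is not a well-defined cost-preserving operation.
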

\begin{proof}
Assume not, then take any optimal tight schedule with a minimal amount of preemptions. Let $t_1$ be the time of the last occurring preemption, let job $j$ be the job that is being interrupted with resource $r$ on machine $i$, let $j^l$ be the respective job part. Let $t_2$ be the time that job part $j^{l+1}$ starts on machine $i'$. Let $j'$ be the job part on machine $i$ that starts at $t_1$ and let $r'$ be its resource.\\
We know that $t_2>t_1$, otherwise we would not have a tight schedule. Furthermore, following from Lemma \ref{SPTorderPRMP}, resource $r$ cannot be used by another job between $t_1$ and $t_2$. \\
We also know that  $i \neq i'$ by the following argument illustrated in Figure \ref{i=i'}. Assume $i = i'$. Take $\epsilon >0$ as the minimal negative slack of all train sequences between $t_1$ and $t_2$ on machine $i$. Then, one can move all jobs between $t_1$ and $t_2$ on machine $i$ $\epsilon$ to the front and then split $j^l$ on $t_1 - \epsilon$ and move the second part from $[t_1-\epsilon,t_1]$ to $[t_2- \epsilon,t_2]$. Since there is no preemption after $t_1$, at least one job finishes earlier in this new situation and no jobs finishing later. Thus, the original schedule was not optimal.\\

\begin{figure}[h!]
\centering
\begin{tikzpicture}[scale = 0.8]
\fill[fill =gray, draw = black] (0,3) rectangle (4,4);
\fill[fill = white, draw = black] (4,3) rectangle (8,4);
\fill[fill =gray, draw = black] (8,3) rectangle (12,4);
\node at (2,3.5) {$j^l$};
\node at (10,3.5) {$j^{l+1}$};
\node at (4,4.25){$t_1$};
\node at (8,4.25){$t_2$};
\node at (6,3.5) {$\cdots$};
\draw[snake=snake] (4,4) -- (4,3);
\draw[->] (6,2.5) -- (6,1.5);
\fill[fill =gray, draw = black] (0,0) rectangle (3,1);
\fill[fill = white, draw = black] (3,0) rectangle (7,1);
\fill[fill =gray, draw = black] (7,0) rectangle (12,1);
\node at (2,0.5) {$j^l$};
\node at (10,0.5) {$j^{l+1}$};
\node at (3,1.25){$t_1 - \epsilon$};
\node at (7,1.25){$t_2 - \epsilon$};
\node at (5,0.5) {$\cdots$};
\draw[snake=snake] (3,1) -- (3,0);
\draw[dashed] (7,4) -- (7,3);
\draw[dashed] (3,4) -- (3,3);
\draw[->] (7.5,4.20) -- (4.5,4.20);
\node at (6,4.5) {move $\epsilon$ to front};
\end{tikzpicture}
\caption{Situation where $i=i'$. The squiggly line represents a preemption.}
\label{i=i'}
\end{figure}
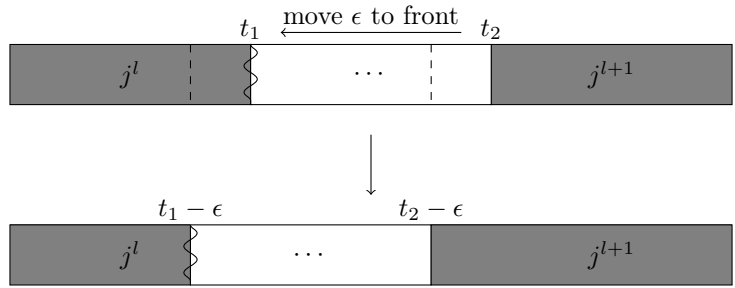

We know that job $j'$ cannot end before or on $t_2$. As illustrated in Figure \ref{Cj'<t2}, if it would, one could move job $j'$ $\epsilon>0$ to the front, where $\epsilon$ is the negative slack of job $j'$. This would split the job part $j^l$ on $t_1 - \epsilon$ and moving the part that was executed during the interval $[t_1-\epsilon,t_1]$ to the back of $j'$. This leads to a feasible schedule, since we defined $t_2$ as the time that job part $j^{l+1}$ starts and resource $r$ is not used between $t_1$ and $t_2$. Furthermore, since $j'$ is a finishing job part and it finishes $\epsilon$ earlier, this will lead to a schedule with better objective value.\\

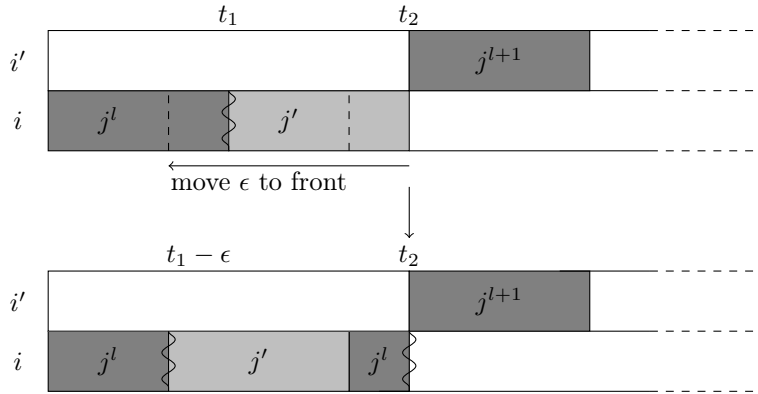
\begin{figure}[h!]
\centering
\begin{tikzpicture}[scale = 0.8]

\fill[fill =gray, draw = black] (0,4) rectangle (3,5);%
\fill[fill =lightgray, draw = black] (3,4) rectangle (6,5);
\draw (6,4) -- (10,4);
\draw (6,5) -- (10,5);
\draw[dashed] (10,4) -- (12,4);
\draw[dashed] (10,5) -- (12,5);
\fill[fill = white, draw = black] (0,5) rectangle (6,6);
\fill[fill =gray, draw = black] (6,5) rectangle (9,6);
\draw (9,6) -- (10,6);
\draw[dashed] (10,6) -- (12,6);
\node at (1,4.5) {$j^l$};
\node at (7.5,5.5) {$j^{l+1}$};
\node at (3,6.25){$t_1$};
\node at (6,6.25){$t_2$};
\node at (4,4.5) {$j'$};
\node at (-0.5,4.5) {$i$};
\node at (-0.5,5.5) {$i'$};
\draw[snake=snake] (3,5) -- (3,4);
\draw[dashed] (5,4) -- (5,5);
\draw[dashed] (2,4) -- (2,5);
\draw[->] (6,3.4) -- (6,2.55);
\draw[->] (6,3.75) -- (2,3.75);
\node at (3.5,3.5) {move $\epsilon$ to front};

\fill[fill =gray, draw = black] (0,0) rectangle (2,1);%
\fill[fill =gray, draw = black] (5,0) rectangle (6,1);%
\fill[fill =lightgray, draw = black] (2,0) rectangle (5,1);
\draw (5.5,0) -- (10,0);
\draw (6.5,1) -- (10,1);
\draw[dashed] (10,0) -- (12,0);
\draw[dashed] (10,1) -- (12,1);
\draw[dashed] (10,2) -- (12,2);
\fill[fill = white, draw = black] (0,1) rectangle (6,2);
\fill[fill =gray, draw = black] (6,1) rectangle (9,2);
\draw (8.5,2) -- (10,2);
\node at (1,0.5) {$j^l$};
\node at (5.5,0.5) {$j^l$};
\node at (7.5,1.5) {$j^{l+1}$};
\node at (2.5,2.25){$t_1-\epsilon$};
\node at (6,2.25){$t_2$};
\node at (3.5,0.5) {$j'$};
\node at (-0.5,0.5) {$i$};
\node at (-0.5,1.5) {$i'$};
\draw[snake=snake] (2,1) -- (2,0);
\draw[snake=snake] (6,1) -- (6,0);
\end{tikzpicture}
\caption{Finding a better solution if $C_{j'} \leq t_2$}
\label{Cj'<t2}
\end{figure}

As a result, there is only one possible situation that can occur with the last preemption: The last preemption is at a different machine than where it later continues and job $j'$ starts at $t_1$ on machine $i$ and does not finish before or on $t_2$. The partial schedule is shown in Figure \ref{situation}.
 
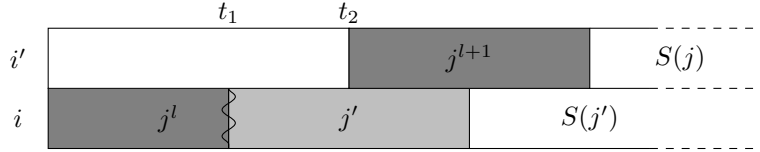
\begin{figure}[h!]
\centering
\begin{tikzpicture}[scale = 0.8]

\fill[fill =gray, draw = black] (0,0) rectangle (3,1);
\fill[fill =lightgray, draw = black] (3,0) rectangle (7,1);
\draw (7,0) -- (10,0);
\draw (7,1) -- (10,1);
\draw[dashed] (10,0) -- (12,0);
\draw[dashed] (10,1) -- (12,1);
\fill[fill = white, draw = black] (0,1) rectangle (5,2);
\fill[fill =gray, draw = black] (5,1) rectangle (9,2);
\draw (9,2) -- (10,2);
\draw[dashed] (10,2) -- (12,2);
\node at (2,0.5) {$j^l$};
\node at (7,1.5) {$j^{l+1}$};
\node at (3,2.25){$t_1$};
\node at (5,2.25){$t_2$};
\node at (5,0.5) {$j'$};
\node at (-0.5,0.5) {$i$};
\node at (-0.5,1.5) {$i'$};
\node at (9,0.5) {$S(j')$};
\node at (10.5,1.5) {$S(j)$};
\draw[snake=snake] (3,1) -- (3,0);
\end{tikzpicture}
\caption{Only possible situation in an optimal schedule with preemption.}
\label{situation}
\end{figure}

Let $S(j)$ be job part $j^{l+1}$ on machine $i'$ and its suffix and let $S(j')$ be job part $j'$ on machine $i$ and its suffix. The jobs in these sets all finish, as $j^l$ is the last preemption. When we look at the number of jobs in both sets, there are two possibilities: 

\begin{itemize}
\item $|S(j)|<|S(j')|$. Figure \ref{Sj<Sj'} illustrates this case. Take a maximal $\epsilon>0$ such that all train sequences in $S(j)$ can start $\epsilon$ later (i.e. have $d^{+}\geq \epsilon$) and all train sequences in $S(j')$ can start $\epsilon$ earlier (i.e. have $d^{-}\geq \epsilon$), while staying in a feasible schedule. Move the sets in the mentioned directions and move the interval $[t_1-\epsilon,t_1]$  of job $j$ on machine $i$ to machine $i'$ on the interval $[t_2,t_2+\epsilon]$. Also, move $j^{l+1}$ $\epsilon$ to the back and $j'$ $\epsilon$ to the front. Clearly, this is a feasible schedule. All job parts in the sets $S(j)$ and $S(j')$ are no preemptions, thus the objective value changes by $\epsilon (|S(j)| - |S(j')|)$, and therefore becomes smaller. Hence, this situation cannot happen in an optimal solution. 

\begin{figure}[h!]
\centering
\begin{tikzpicture}[scale = 0.8]

\fill[fill =gray, draw = black] (0,4) rectangle (3,5);
\fill[fill =lightgray, draw = black] (3,4) rectangle (7,5);
\draw (7,4) -- (10,4);
\draw (7,5) -- (10,5);
\draw[dashed] (10,4) -- (12,4);
\draw[dashed] (10,5) -- (12,5);
\fill[fill = white, draw = black] (0,5) rectangle (5,6);
\fill[fill =gray, draw = black] (5,5) rectangle (9,6);
\draw (9,6) -- (10,6);
\draw[dashed] (10,6) -- (12,6);
\node at (2,4.5) {$j^l$};
\node at (7,5.5) {$j^{l+1}$};
\node at (3,6.25){$t_1$};
\node at (5,6.25){$t_2$};
\node at (5,4.5) {$j'$};
\node at (-0.5,4.5) {$i$};
\node at (-0.5,5.5) {$i'$};
\draw[snake=snake] (3,5) -- (3,4);
\draw[->] (5.5,6.25) -- (10,6.25);
\node at (8,6.5) {move $\epsilon$ to back};
\draw[->] (9,3.75) -- (3,3.75);
\node at (5.5,3.5) {move $\epsilon$ to front};
\draw[dashed] (2.5,4) -- (2.5,5);
\draw[dashed] (9.5,5) -- (9.5,6);
\draw[->] (6,3.2) -- (6,2.5);

\fill[fill =gray, draw = black] (0,0) rectangle (2.5,1);
\fill[fill =lightgray, draw = black] (2.5,0) rectangle (6.5,1);
\draw (6.5,0) -- (10,0);
\draw (6.5,1) -- (10,1);
\draw[dashed] (10,0) -- (12,0);
\draw[dashed] (10,1) -- (12,1);
\fill[fill = white, draw = black] (0,1) rectangle (5,2);
\fill[fill =gray, draw = black] (5,1) rectangle (9.5,2);
\draw (8.5,2) -- (10,2);
\draw[dashed] (6.5,4) -- (6.5,5);
\draw[dashed] (10,2) -- (12,2);
\node at (2,0.5) {$j^l$};
\node at (7,1.5) {$j^{l+1}$};
\node at (2.5,2.25){$t_1-\epsilon$};
\node at (5,2.25){$t_2$};
\node at (5.5,0.5) {$j'$};
\node at (-0.5,0.5) {$i$};
\node at (-0.5,1.5) {$i'$};
\draw[snake=snake] (2.5,1) -- (2.5,0);

\end{tikzpicture}
\caption{Finding a better solution if $|S(j)|<|S(j')|$}
\label{Sj<Sj'}
\end{figure}
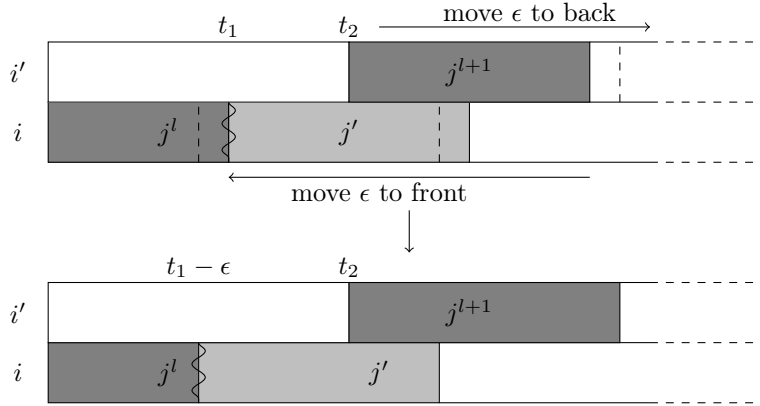

\item $|S(j)|\geq|S(j')|$. Figure \ref{Sj>Sj'} illustrates this case. Take a maximal $\epsilon>0$ such that all train sequences in $S(j)$ can start $\epsilon$ earlier (i.e. have $d^{-}\geq \epsilon$) and all train sequences in $S(j')$ can start $\epsilon$ later (i.e. have $d^{+}\geq \epsilon$), while staying a feasible schedule. Move the sets in the mentioned directions and move the interval $[t_2,t_2+\epsilon]$  of job $j$ on machine $i'$ to machine $i$ on the interval $[t_1,t_1+\epsilon]$. Also move job $j^{l+1}$ to the front and $j'$ to the back. Clearly, this is a feasible schedule. All job parts in the sets $S(j)$ and $S(j')$ are no preemptions, thus the objective value changes by $\epsilon (|S(j')| - |S(j)|)$. Hence, $|S(j)| > |S(j')|$ cannot happen in an optimal solution. \\

\begin{figure}[h!]
\centering
\begin{tikzpicture}[scale = 0.8]

\fill[fill =gray, draw = black] (0,4) rectangle (3,5);
\fill[fill =lightgray, draw = black] (3,4) rectangle (7,5);
\draw (7,4) -- (10,4);
\draw (7,5) -- (10,5);
\draw[dashed] (10,4) -- (12,4);
\draw[dashed] (10,5) -- (12,5);
\fill[fill = white, draw = black] (0,5) rectangle (5,6);
\fill[fill =gray, draw = black] (5,5) rectangle (9,6);
\draw (9,6) -- (10,6);
\draw[dashed] (10,6) -- (12,6);
\node at (2,4.5) {$j^l$};
\node at (7,5.5) {$j^{l+1}$};
\node at (3,6.25){$t_1$};
\node at (5,6.25){$t_2$};
\node at (5,4.5) {$j'$};
\node at (-0.5,4.5) {$i$};
\node at (-0.5,5.5) {$i'$};
\draw[snake=snake] (3,5) -- (3,4);
\draw[->] (10,6.25) -- (5.5,6.25);
\node at (7.5,6.5) {move $\epsilon$ to front};
\draw[->] (3,3.75) -- (7.5,3.75);
\node at (5.5,3.5) {move $\epsilon$ to back};
\draw[dashed] (7.5,4) -- (7.5,5);
\draw[dashed] (8.5,5) -- (8.5,6);
\draw[->] (6,3.2) -- (6,2.5);

\fill[fill =gray, draw = black] (0,0) rectangle (3.5,1);
\fill[fill =lightgray, draw = black] (3.5,0) rectangle (7.5,1);
\draw (7.5,0) -- (10,0);
\draw (7.5,1) -- (10,1);
\draw[dashed] (10,0) -- (12,0);
\draw[dashed] (10,1) -- (12,1);
\fill[fill = white, draw = black] (0,1) rectangle (5,2);
\fill[fill =gray, draw = black] (5,1) rectangle (8.5,2);
\draw (8.5,2) -- (10,2);

\draw[dashed] (10,2) -- (12,2);
\node at (2,0.5) {$j^l$};
\node at (7,1.5) {$j^{l+1}$};
\node at (3.5,2.25){$t_1+\epsilon$};
\node at (5,2.25){$t_2$};
\node at (5.5,0.5) {$j'$};
\node at (-0.5,0.5) {$i$};
\node at (-0.5,1.5) {$i'$};
\draw[snake=snake] (3.5,1) -- (3.5,0);
\draw[dashed] (3.5,5) -- (3.5,4);
\end{tikzpicture}
\caption{Finding a better solution if $|S(j)|>|S(j')|$}
\label{Sj>Sj'}
\end{figure}
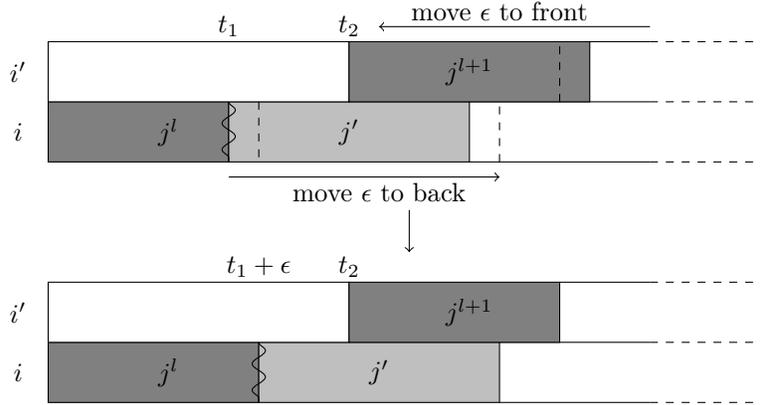
\end{itemize}
Therefore, in an optimal solution, $|S(j)| = |S(j')|$. Since the $\epsilon$ was chosen maximal in the $|S(j)| \geq |S(j')|$ case, there must be a new tight pair, created by moving the suffices backward and forward. Untangle new tight pairs, such that the schedule becomes tight again. In the new schedule, it must hold that again $|S(j)|=|S(j')|$, because otherwise the solution was not optimal. It is possible that (one of the) newly tight pairs was $(j^l,j^{l+1})$, in that case a preemption was removed contradicting the assumption that the number of preemptions was minimal. If not, keep repeating moving the suffices and job parts as described and untangling. This can be done only a finite number of times since there is a maximum number of tight pairs (bounded by $n$) and because during this process, tight pairs remain tight and no new job parts are created (and with it new preemptions). Hence, there is an optimal schedule containing no preemptions.
\end{proof}

Following from Theorem \ref{nopreempt} and Lemma \ref{SPTorderPRMP}, we know that $P|\textit{partition},\textit{prmp} |\sum_j C_j$ and $P|\textit{partition}|\sum_j C_j$ have the same objective value and we obtain the following result.

\begin{theorem}\label{SPTorderOPT} 
In an optimal schedule for $P|\textit{partition}|\sum_j C_j$, all jobs sharing the same resource are processed in SPT-order, i.e., if job $j$ and $j'$ both use resource $r \in R$ and $p_j < p_{j'}$ then $C_j < C_{j'}$.  
\end{theorem}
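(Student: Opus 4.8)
The plan is to obtain this statement as a direct consequence of the two preceding results, by observing that the preemptive version is a relaxation of the non-preemptive one and that this relaxation is \emph{exact}. Exactness is what lets me promote an optimal non-preemptive schedule to an optimal preemptive schedule and then apply the preemptive SPT-lemma verbatim.

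First I would fix an arbitrary optimal schedule $S$ for $P|\textit{partition}|\sum_j C_j$ and let $\mathrm{OPT}$ denote its objective value. Since every non-preemptive schedule is in particular a feasible preemptive schedule, the optimal value of $P|\textit{partition}, \textit{prmp}|\sum_j C_j$ is at most $\mathrm{OPT}$. Conversely, Theorem \ref{nopreempt} guarantees an optimal preemptive schedule that uses no preemptions, and such a schedule is itself a feasible non-preemptive schedule, so the optimal preemptive value is at least $\mathrm{OPT}$. Hence the two optimal values coincide, and $S$ is therefore also an optimal schedule for the preemptive relaxation.

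Having placed $S$ inside the preemptive setting, I would simply invoke Lemma \ref{SPTorderPRMP}: in any optimal preemptive schedule, all jobs sharing a resource are processed in SPT-order. Since $S$ is such a schedule (and in fact uses no preemptions), its completion times satisfy $C_j < C_{j'}$ whenever $r_j = r_{j'}$ and $p_j < p_{j'}$, which is exactly the assertion of the theorem.

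There is no genuine obstacle here; all the real work resides in Theorem \ref{nopreempt} and Lemma \ref{SPTorderPRMP}. The one step that deserves explicit care is the logical passage from \emph{equality of optimal values} to the conclusion that an optimal non-preemptive schedule is also an optimal \emph{preemptive} schedule, which is what makes Lemma \ref{SPTorderPRMP} applicable to $S$. This inference is valid precisely because the relaxation is exact, a fact supplied by Theorem \ref{nopreempt}; without it, one could only conclude the weaker existence of \emph{some} SPT-ordered optimal schedule rather than the stated property of \emph{every} optimal schedule.
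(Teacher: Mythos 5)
Your argument is correct and is precisely the paper's own route: the paper derives Theorem \ref{SPTorderOPT} by observing that Theorem \ref{nopreempt} makes the preemptive and non-preemptive optimal values coincide, so an optimal non-preemptive schedule is an optimal preemptive one and Lemma \ref{SPTorderPRMP} applies to it. Your write-up just spells out the two-way inequality between the optima that the paper leaves implicit.
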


We continue by looking at $p_j=1$, since $P3|\text{res}.11, p_j=1|\sum_j C_j$ is $\mathcal{NP}$-hard. Surprisingly, with at most one resource per job, the problem becomes polynomially solvable.

\begin{theorem}\label{res_p1}
$P|\textit{partition}, p_j=1|\sum_j C_j$ is polynomially solvable.
\end{theorem}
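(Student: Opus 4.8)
The plan is to exploit the fact that with $p_j=1$ the time axis decomposes into unit slots, and to reduce the whole problem to a min-cost flow (transportation) computation.

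First I would argue that, because every $p_j=1$, there is an optimal schedule in which every job starts and finishes at an integer time; combined with Lemma \ref{idletime} (no idle time), such a schedule simply partitions the jobs into consecutive \emph{slots} $1,2,\dots,T$, where slot $t$ is the set of jobs completing at time $t$. Two constraints govern a slot: at most $m$ jobs fit (one per machine), and, by the \emph{partition} constraint, at most one job of each resource class $r^k$ may appear (two jobs of the same class would run simultaneously). Writing $n_k=|r^k|$ and $n=|J|$, the objective becomes $\sum_j C_j=\sum_t t\cdot(\text{number of jobs placed in slot }t)$, and the SPT statement of Theorem \ref{SPTorderOPT} is automatically satisfied since all processing times are equal. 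The number of slots ever needed is at most $\max\{\max_k n_k,\lceil n/m\rceil\}\le n$, so $T=n$ slots always suffice.

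Second, I would record the structural reason the problem is easy: since all jobs are unit length, minimising $\sum_j C_j$ is equivalent to completing as many jobs as possible as early as possible. Using the identity $\sum_j C_j=\sum_{t\ge 0}\bigl(n-g(t)\bigr)$, where $g(t)$ is the number of jobs finished within the first $t$ slots, an optimal schedule is one that maximises $g(t)$ for every $t$ at once. The pointwise bound $g(t)\le\min\bigl\{mt,\ \sum_k\min(n_k,t)\bigr\}$ (at most $m$ jobs per slot, and at most one of each class per slot, hence at most $\min(n_k,t)$ of class $k$ in $t$ slots) gives a clean lower bound on the optimum, and the task is to exhibit a single schedule meeting it.

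Third, rather than proving simultaneous achievability by hand, I would obtain the schedule from a min-cost flow. Build a network with a source $\sigma$, a node for each resource class $r^k$, a node for each slot $t\in\{1,\dots,T\}$ with $T=n$, and a sink $\tau$. Add arcs $\sigma\to r^k$ of capacity $n_k$ and cost $0$; arcs $r^k\to t$ of capacity $1$ and cost $0$ (forcing at most one job of class $k$ into slot $t$); and arcs $t\to\tau$ of capacity $m$ and cost $t$ (at most $m$ jobs per slot, each contributing $t$ to the objective). A unit of flow on $r^k\to t\to\tau$ represents scheduling one job of class $k$ in slot $t$, so a minimum-cost integral flow of value $n$ corresponds exactly to a feasible unit-time schedule of all jobs minimising $\sum_j C_j$; since capacities are integral such an integral optimum exists and is found in polynomial time. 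The network has $O(n)$ nodes and $O(n^2)$ arcs, so the whole procedure is polynomial.

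The part needing care is checking that the reduction is faithful and feasible. I must verify that a flow of value $n$ exists at all, i.e.\ that all jobs can be placed within $T=n$ slots; this follows from the cut (equivalently Hall) condition, using $T\ge\max_k n_k$ and $mT\ge n$. I must also check that the capacity-$1$ arcs $r^k\to t$ correctly encode the no-simultaneity constraint and that the arc costs reproduce $\sum_j C_j$ exactly, so that optimal flows and optimal schedules coincide in both directions. The only genuinely delicate point is the equivalence between integral min-cost flows and schedules; the alternative, purely combinatorial route would be to run the greedy rule that in each slot serves one job from each of the $m$ currently largest classes, and to prove by an exchange argument that it attains $g(t)=\min\{mt,\sum_k\min(n_k,t)\}$ for all $t$ simultaneously. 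I expect that exchange argument to be the main obstacle on the combinatorial side, which is precisely why I prefer routing the existence of an optimal schedule through the flow formulation, where integrality is automatic.
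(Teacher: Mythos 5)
Your proposal is correct and takes essentially the same route as the paper: both reduce $P|\textit{partition}, p_j=1|\sum_j C_j$ to a min-cost flow over (resource, time-slot) pairs, after using Lemma \ref{idletime} to argue that an optimal schedule decomposes into unit slots containing at most $m$ jobs and at most one job per resource class. The only substantive difference is the granularity of the network: the paper keeps a node per individual job and a node per (machine, position) pair, placing the cost $p$ on the unit-capacity machine-position-to-sink arcs, whereas you collapse the $n_k$ interchangeable unit jobs of class $r^k$ into a single supply node and the $m$ interchangeable machines into one capacity-$m$, cost-$t$ arc per slot. Your network is smaller and the correspondence is cleaner (and you are right to let integrality of min-cost flow replace the greedy exchange argument you sketch as an alternative); the price is that the paper's finer network is exactly what makes its follow-up remarks immediate, namely handling $\sum_j w_jC_j$ by putting job-dependent costs on the job-to-(resource,position) arcs and handling eligibility constraints $\mathcal{M}_r$ by deleting arcs to forbidden machines, neither of which survives collapsing jobs and machines. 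One point worth stating explicitly: decoding a flow to a machine assignment may leave idle gaps on some machines, which is harmless since such a schedule is still feasible with the same completion times, while the converse direction (schedule to flow) uses the no-idle-time, integer-slot normalization.
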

\begin{proof}
The problem can be reduced to an instance of the Min-Cost Flow problem. Next to the source node $s$ and the target node $t$, we construct three sets of nodes $V_{J}$, $V_{\text{res},\text{pos}}$, $V'_{\text{res},\text{pos}}$ and $V_{M,\text{pos}}$. The set $V_{J}$ corresponds to the jobs. It has $n$ nodes; one for every job. The set $V_{\text{res},\text{pos}}$ corresponds to resource needed and the completion time/position of the job on a machine in the schedule. Completion time and position on a machine are in this case equal since $p_j=1$ and we may assume no idle times from Lemma \ref{idletime}. The set $V_{\text{res},\text{pos}}$ has $n |R|$ nodes. The set $V'_{\text{res},\text{pos}}$ is a duplicate of these nodes. The set $V_{\text{M},\text{pos}}$ corresponds to machine used and the completion time/position of the job on the machine in the schedule. It has $m n$ nodes.\\
We start by constructing arcs with cost 0 and capacity 1. The first set of arcs is constructed from $s$ to every node in $V_{J}$. From every node $v_j \in V_{J}$, we construct an arc to every node in $V_{\text{res},\text{pos}}$ that corresponds to the resource needed by the job $j$. We construct from every node $v_{r,p} \in V_{\text{res},\text{pos}}$ an arc to the corresponding node with the same resource and position $v'_{r,p} \in V'_{\text{res},\text{pos}}$. Next, we construct from every node $v'_{r,p} \in V'_{\text{res},\text{pos}}$ an arc to every node in $V_{M,\text{pos}}$ having the same position $p$. Lastly, we construct arcs with capacity $1$ and cost equal to position $p$ from every node $v_{m,p} \in V_{M,\text{pos}}$ to $t$. We thus have $n(1+n+|R|+|R|m+m)$ arcs. Lastly, we require that we have at least $n$ units of flow from $s$ to $t$. 

\begin{figure}[ht!]
\centering
\begin{tikzpicture}[line cap=round,line join=round,>=triangle 45,x=0.78cm,y=0.78cm]
\clip(-6.8,-4.5) rectangle (9,5.9);
\draw [line width=0.8pt] (-6.,0.)-- (-3.,1.);
\draw [line width=0.8pt] (-6.,0.)-- (-3.,-1.);
\draw [line width=0.8pt] (-3.,1.)-- (0.,4.);
\draw [line width=0.8pt] (-3.,1.)-- (0.,3.);
\draw [line width=0.8pt] (-3.,1.)-- (0.,2.);
\draw [line width=0.8pt] (-3.,1.)-- (0.,1.);
\draw [line width=0.8pt] (-3.,-1.)-- (0.,-1.);
\draw [line width=0.8pt] (-3.,-1.)-- (0.,-2.);
\draw [line width=0.8pt] (-3.,-1.)-- (0.,-3.);
\draw [line width=0.8pt] (-3.,-1.)-- (0.,-4.);
\draw [line width=0.8pt] (-6.,0.)-- (-3.,2.);
\draw [line width=0.8pt] (-6.,0.)-- (-3.,-2.);
\draw [line width=0.8pt] (4.2,4.)-- (8.,0.);
\draw [line width=0.8pt] (4.2,3.)-- (8.,0.);
\draw [line width=0.8pt] (4.2,2.)-- (8.,0.);
\draw [line width=0.8pt] (4.2,1.)-- (8.,0.);
\draw [line width=0.8pt] (4.2,-1.)-- (8.,0.);
\draw [line width=0.8pt] (4.2,-2.)-- (8.,0.);
\draw [line width=0.8pt] (4.2,-3.)-- (8.,0.);
\draw [line width=0.8pt] (4.2,-4.)-- (8.,0.);
\draw [line width=0.8pt] (-3.,2.)-- (0.,4.);
\draw [line width=0.8pt] (-3.,2.)-- (0.,3.);
\draw [line width=0.8pt] (-3.,2.)-- (0.,2.);
\draw [line width=0.8pt] (-3.,2.)-- (0.,1.);
\draw [line width=0.8pt] (-3.,-2.)-- (0.,-4.);
\draw [line width=0.8pt] (-3.,-2.)-- (0.,-3.);
\draw [line width=0.8pt] (-3.,-2.)-- (0.,-2.);
\draw [line width=0.8pt] (-3.,-2.)-- (0.,-1.);
\draw (-3.6,6) node[anchor=north west] {Jobs};
\draw (-0.5,6) node[anchor=north west] {\parbox{2.256010926485105 cm}{Resource \&  \\ Position}};
\draw (3.3,6) node[anchor=north west] {\parbox{2.0276453034878132 cm}{Machine \& \\ Position}};
\draw [line width=0.8pt] (0.,4.007796456697401)-- (1.2,4.);
\draw [line width=0.8pt] (0.,3.)-- (1.2,3.);
\draw [line width=0.8pt] (0.,2.)-- (1.2,2.);
\draw [line width=0.8pt] (0.,1.)-- (1.2,1.);
\draw [line width=0.8pt] (0.,-1.)-- (1.2,-1.);
\draw [line width=0.8pt] (0.,-2.)-- (1.2,-2.);
\draw [line width=0.8pt] (0.,-3.)-- (1.2,-3.);
\draw [line width=0.8pt] (0.,-4.)-- (1.2,-4.);
\draw [line width=0.8pt] (1.2,4.)-- (4.2,4.);
\draw [line width=0.8pt] (1.2,3.)-- (4.2,3.);
\draw [line width=0.8pt] (1.2,3.)-- (4.2,-2.);
\draw [line width=0.8pt] (1.2,4.)-- (4.2,-1.);
\draw [line width=0.8pt] (1.2,2.)-- (4.2,1.9905667868879942);
\draw [line width=0.8pt] (1.2,2.)-- (4.2,-3.);
\draw [line width=0.8pt] (1.2,1.)-- (4.2,1.);
\draw [line width=0.8pt] (1.2,1.)-- (4.2,-4.);
\draw [line width=0.8pt] (1.2,-1.)-- (4.2,-1.);
\draw [line width=0.8pt] (1.2,-1.)-- (4.2,4.);
\draw [line width=0.8pt] (1.2,-2.)-- (4.2,-2.);
\draw [line width=0.8pt] (1.2,-2.)-- (4.2,3.);
\draw [line width=0.8pt] (1.2,-3.)-- (4.2,-3.);
\draw [line width=0.8pt] (1.2,-3.)-- (4.2,1.9905667868879942);
\draw [line width=0.8pt] (1.2,-4.)-- (4.2,-4.);
\draw [line width=0.8pt] (1.2,-4.)-- (4.2,1.);
\begin{scriptsize}
\draw [fill=black] (-6.,0.) circle (2.5pt);
\draw[color=black] (-6.2,0.3444312544491866) node {$s$};
\draw [fill=black] (-3.,2.) circle (2.5pt);
\draw[color=black] (-3.05,2.4) node {$J_1$};
\draw [fill=black] (-3.,-2.) circle (2.5pt);
\draw[color=black] (-3.05,-2.4) node {$J_4$};
\draw [fill=black] (4.2,4.) circle (2.5pt);
\draw[color=black] (4.3,4.3) node {$M_1,1$};
\draw [fill=black] (4.2,-4.) circle (2.5pt);
\draw[color=black] (4.3,-4.33) node {$M_2,4$};
\draw [fill=black] (8.013895046403343,0.) circle (2.5pt);
\draw[color=black] (8.261291137983743,0.3444312544491866) node {$t$};
\draw [fill=black] (-3.,1.) circle (2.5pt);
\draw[color=black] (-3.05,1.4) node {$J_2$};
\draw [fill=black] (-3.,-1.) circle (2.5pt);
\draw[color=black] (-3.05,-1.4) node {$J_3$};
\draw [fill=black] (0.,4.007796456697401) circle (2.5pt);
\draw[color=black] (0.,4.3) node {$r_1,1$};
\draw [fill=black] (0.,3.) circle (2.5pt);
\draw[color=black] (0.,3.3) node {$r_1,2$};
\draw [fill=black] (0.,2.) circle (2.5pt);
\draw[color=black] (0.,2.3) node {$r_1,3$};
\draw [fill=black] (0.,-1.) circle (2.5pt);
\draw[color=black] (0.,-1.3) node {$r_2,1$};
\draw [fill=black] (0.,-2.) circle (2.5pt);
\draw[color=black] (0.,-2.3) node {$r_2,2$};
\draw [fill=black] (0.,1.) circle (2.5pt);
\draw[color=black] (0.,1.3) node {$r_1,4$};
\draw [fill=black] (0.,-3.) circle (2.5pt);
\draw[color=black] (0.,-3.3) node {$r_2,3$};
\draw [fill=black] (0.,-4.) circle (2.5pt);
\draw[color=black] (0.,-4.3) node {$r_2,4$};
\draw [fill=black] (4.2,3.) circle (2.5pt);
\draw[color=black] (4.3,3.3) node {$M_1,2$};
\draw [fill=black] (4.2,1.9905667868879942) circle (2.5pt);
\draw[color=black] (4.3,2.3) node {$M_1,3$};
\draw [fill=black] (4.2,1.) circle (2.5pt);
\draw[color=black] (4.3,1.3) node {$M_1,4$};
\draw [fill=black] (4.2,-1.) circle (2.5pt);
\draw[color=black] (4.3,-1.33) node {$M_2,1$};
\draw [fill=black] (4.2,-2.) circle (2.5pt);
\draw[color=black] (4.3,-2.33) node {$M_2,2$};
\draw [fill=black] (4.2,-3.) circle (2.5pt);
\draw[color=black] (4.3,-3.33) node {$M_2,3$};
\draw[color=black] (6,2.4) node {\color{darkgray} 1};
\draw[color=black] (6,1.8) node {\color{darkgray} 2};
\draw[color=black] (6,1.25) node {\color{darkgray} 3};
\draw[color=black] (6,0.7) node {\color{darkgray} 4};
\draw[color=black] (6,-0.7) node {\color{darkgray} 1};
\draw[color=black] (6,-1.25) node {\color{darkgray} 2};
\draw[color=black] (6,-1.8) node {\color{darkgray} 3};
\draw[color=black] (6,-2.4) node {\color{darkgray} 4};
\draw [fill=black] (1.2,4.) circle (2.5pt);
\draw[color=black] (1.1,4.35) node {$r_1,1'$};
\draw [fill=black] (1.2,3.) circle (2.5pt);
\draw[color=black] (1.1,3.35) node {$r_1,2'$};
\draw [fill=black] (1.2,2.) circle (2.5pt);
\draw[color=black] (1.1,2.35) node {$r_1,3'$};
\draw [fill=black] (1.2,1.) circle (2.5pt);
\draw[color=black] (1.1,1.35) node {$r_1,4'$};
\draw [fill=black] (1.2,-1.) circle (2.5pt);
\draw[color=black] (1.1,-1.28) node {$r_2,1'$};
\draw [fill=black] (1.2,-2.) circle (2.5pt);
\draw[color=black] (1.1,-2.28) node {$r_2,2'$};
\draw [fill=black] (1.2,-3.) circle (2.5pt);
\draw[color=black] (1.1,-3.28) node {$r_2,3'$};
\draw [fill=black] (1.2,-4.) circle (2.5pt);
\draw[color=black] (1.1,-4.28) node {$r_2,4'$};
\end{scriptsize}
\end{tikzpicture}
\caption{Min Cost-Flow instance for $P|\textit{partition}, p_j=1|\sum_j C_j$ with 4 jobs and 2 resources.}
\end{figure}
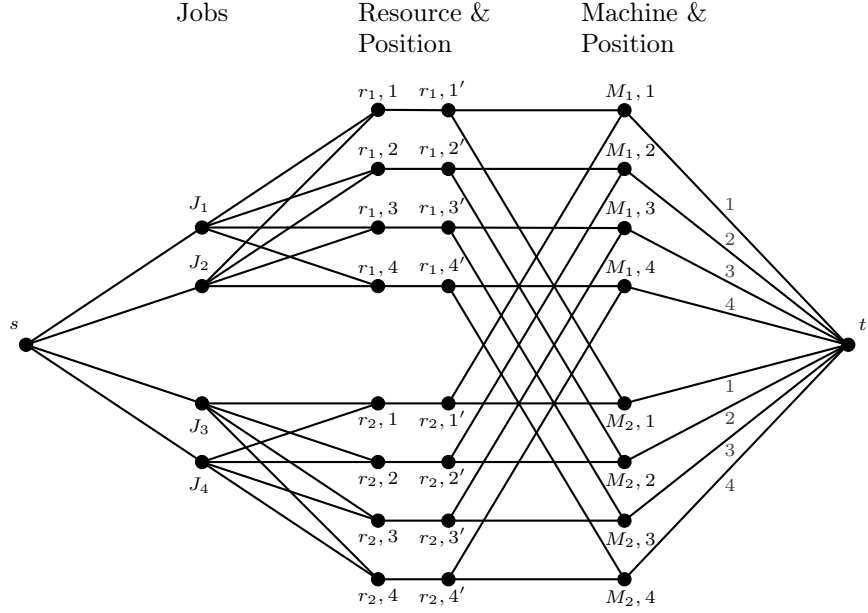
Suppose we have an instance of $P|\textit{partition}, p_j=1|\sum_j C_j$ with objective value $c$ and consider its corresponding Min-Cost Flow instance. We put one unit of flow in the network for every job in the schedule of $P|\textit{partition}, p_j=1|\sum_j C_j$ corresponding to the job, its position, which machine and which resource used. For example, for a job $j$ on position $p$ machine $M$ using resource $r$, we put one unit of flow from $s$ to $t$ through nodes $v_{J}$, $v_{r,p}$, $v'_{r,p}$ and $M,p$. Since no jobs share the same machine or resource at a given position, every node $v \in V\setminus \{s,t\} $ will have at most one unit of flow going in and going out. Thus, we have a feasible flow. Furthermore, we only put flow on edges from $v_{m,p} \in V_{M,\text{pos}}$ to $t$, if and only if we also have a job with corresponding machine $m$ and position $p$, thus we have a flow of cost $c$. The reverse is also true by this construction and hence we have an objective value of $c$ for $P|\textit{partition}, p_j=1|\sum_j C_j$ if and only if we have an objective value of $c$ for its corresponding Min-Cost Flow instance.
\end{proof}

Note that, the above proof can easily be adjusted to the problem where one has more than one unit of a resource available, by setting the capacities of the arcs between $V_{\text{res},\text{pos}}$ and $V'_{\text{res},\text{pos}}$ equal to the amount of resources one has. Furthermore, notice that one could put the costs on a different set of edges. In this way, one can also show that $P|\textit{partition}, p_j=1|\sum_j w_j C_j$ is polynomially solvable by setting the cost of the edges between $v_J$ and $v_{r,p}$ equal to $w_j$ times the position.\\

Since we know that $P|\textit{partition}, p_j=1|\sum_j C_j$ is polynomially solvable, one might wonder what happens when the processing time of each job is bounded, i.e. $1 \leq p_j \leq c$, where $c$ is a constant. A simple Shrinking algorithm would be to create an instance of $P|\textit{partition}, p_j=1|\sum_j C_j$  by setting all processing times in our original problem to one. We then solve this problem using the construction in Theorem \ref{res_p1} to obtain an optimal schedule $S^{\textsc{Opt}}_{p_j=1}$. We can then construct a feasible schedule for $P|\textit{partition}, 1 \leq p_j\leq c |\sum_j C_j$ in the following way: All jobs in $S^{\textsc{Opt}}_{p_j=1}$ start at a given integer $s_j \in \mathbb{Z}^+$, since $p_j=1$. We can create a feasible solution $S_{\textsc{Alg}}$ from  $S^{\textsc{Opt}}_{p_j=1}$ by setting the starting time for all jobs to $c s_j$. In this way, in every time interval $c i\leq t \leq c (i+1)$ with $i \in \mathbb{Z}^+$, every machine will only work on one product. There also will not be any resource conflicts, since there were none in $S^{\textsc{Opt}}_{p_j=1}$ in the corresponding interval $i\leq t \leq (i+1)$.

\begin{proposition}
The Shrinking algorithm gives a $c$-approximation for $P|\textit{partition}, \allowbreak 1 \leq p_j\leq c |\sum_j C_j$. 
\end{proposition}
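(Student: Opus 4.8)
The plan is to prove the bound by sandwiching the optimum of the bounded-length instance. Write $\textsc{Opt}$ for the optimum of $P|\textit{partition}, 1 \le p_j \le c|\sum_j C_j$ and $\textsc{Opt}_{p_j=1}$ for the optimum of the auxiliary unit-length instance, i.e. the value $\sum_j C_j$ attained by $S^{\textsc{Opt}}_{p_j=1}$. I would first show $\sum_j C_j^{\textsc{Alg}} \le c\cdot \textsc{Opt}_{p_j=1}$ and then $\textsc{Opt}_{p_j=1} \le \textsc{Opt}$; chaining these gives $\sum_j C_j^{\textsc{Alg}} \le c\cdot \textsc{Opt}$, which is exactly the claimed $c$-approximation.

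For the first inequality I would argue job by job. Recall that each job in $S^{\textsc{Opt}}_{p_j=1}$ starts at an integer time $s_j$ and, having unit length, completes at $s_j+1$. The Shrinking algorithm sets the start time of $j$ to $c\,s_j$, so its completion time in $S_{\textsc{Alg}}$ is $c\,s_j + p_j$. Since $p_j \le c$, this is at most $c\,s_j + c = c\,(s_j+1) = c\cdot C_j^{p=1}$. Summing over all jobs yields $\sum_j C_j^{\textsc{Alg}} \le c\sum_j C_j^{p=1} = c\cdot \textsc{Opt}_{p_j=1}$, using that $S_{\textsc{Alg}}$ is feasible, as already observed when the algorithm was described.

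The crux is the second inequality, $\textsc{Opt}_{p_j=1} \le \textsc{Opt}$, which says that shrinking every processing time to $1$ cannot increase the optimal total completion time. I would prove it by exhibiting a single feasible unit-length schedule whose objective equals $\textsc{Opt}$. Start from an optimal schedule of the bounded instance, in which each job $j$ occupies an interval $[C_j-p_j, C_j]$ on its machine, and replace this interval by its final unit $[C_j-1, C_j]$, which is legitimate since $p_j \ge 1$. This keeps every completion time equal to $C_j$, so the objective is unchanged. The step needing care is feasibility: for two jobs $a,b$ that lie on the same machine or share a resource, non-overlap in the original schedule forces (assuming $a$ precedes $b$) $C_a \le C_b - p_b \le C_b - 1$, so the compressed unit intervals $[C_a-1,C_a]$ and $[C_b-1,C_b]$ remain disjoint. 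Hence no machine or resource conflict is introduced, and we obtain a feasible unit-length schedule of value $\sum_j C_j = \textsc{Opt}$; since $S^{\textsc{Opt}}_{p_j=1}$ is optimal for that instance by Theorem \ref{res_p1}, this gives $\textsc{Opt}_{p_j=1} \le \textsc{Opt}$. Combining the two inequalities completes the proof. I expect this compression argument, and specifically the verification that resource and machine constraints survive it, to be the only nontrivial part.
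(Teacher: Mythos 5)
Your proposal is correct and follows essentially the same route as the paper: bound the algorithm's cost by $c\cdot\textsc{Opt}_{p_j=1}$ via the scaled start times, and show $\textsc{Opt}_{p_j=1}\le\textsc{Opt}$ by compressing each job in an optimal bounded-length schedule to its final unit $[C_j-1,C_j]$, preserving completion times and feasibility. Your write-up is in fact slightly more careful than the paper's on both the per-job completion-time bound and the disjointness of the compressed intervals.
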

\begin{proof}
Let $\textsc{Opt}$ denote the optimal value for $P|\textit{partition}, 1 \leq p_j\leq c |\sum_j C_j$ and $S^{\textsc{Opt}}$ the optimal schedule and $\textsc{Opt}_{p_j=1}$ denote the optimal value for $P|\textit{partition}, p_j=1 |\sum_j C_j$. The Schrinking algorithm gives a feasible solution of cost $c \textsc{Opt}_{p_j=1}$, so it suffices to show that $\textsc{Opt} \geq \textsc{Opt}_{p_j=1}$. Using $S^{\textsc{Opt}}$, we can find a schedule $S_{p_j=1}$ for our constructed instance of $P|\textit{partition}, p_j=1|\sum_j C_j$. This is done by scheduling all jobs 1 time unit before there completion time in $S^{\textsc{Opt}}$. Thus all completion times in $S_{p_j=1}$ will be the same as in $S^{\textsc{Opt}}$. Furthermore, since $1 \leq p_j\leq c$ in $S^{\textsc{Opt}}$, there will be no resource conflict in $S_{p_j=1}$, since there were none in $S^{\textsc{Opt}}$. 
\end{proof} 
Note that one can remove all idle time in $S_{\textsc{Alg}}$ by using the untangling operation described in the proof of Lemma \ref{idletime}.

\section{Shortest processing time first}
The shortest processing time first (SPT) rule is optimal for a few scheduling problems, one of which is $P| |\sum_j C_j$. In this section, we will look at how well it performs for $P|\textit{partition}|\sum_j C_j$. Before we can do this however we need to adjust the rule slightly to cope with the resources. 

\begin{definition}
The SPT-\textit{available} rule schedules the jobs according to a list. This list contains all jobs ordered for shortest to largest processing time. At any point in time, when a machine is available for processing. The rule selects the first job in the list for which the resource is not in use. It then removes the job from the list. If multiple machines are available at time $t$ and a job is selected of which the resource was not available just before time $t$, the algorithm will put this job on the machine that was previously using this resource. Otherwise the rule will choose an arbitrary available machine. No job is added to a machine that is available if the resource is in use of all jobs on the list.
\end{definition}

Because of the way we defined the SPT-\textit{available} rule, jobs that share the same resource and that are processed one after another will be scheduled on the same machine. In other words, the SPT-\textit{available} rule produces a tight schedule. This schedule also has no idle times, since if at time $t$ a job $j$ of resource $r$ is scheduled on machine $m$ which would create an idle time, then it could not be scheduled on that machine earlier due to some other job $j'$ using the same resource on some machine $m'$. $j$ can only be scheduled at time $t$ since $j'$ just finished. But the rule states that job $j$ then has a preference for machine $m'$ over $m$ (creating a tight schedule).\\
The SPT-rule is optimal for $P| |\sum_j C_j$. Hence, one might wonder whether this is also the case with the SPT-\textit{available} rule for $P|\textit{partition}|\sum_j C_j$. Example \ref{SPTnotopt} shows that this is not the case. 

\begin{example}{(SPT-\textit{available} not optimal)}\label{SPTnotopt}
Consider 2 machines with 12 jobs that use 3 resources. We divide the jobs $J$ into 3 groups depending on the resource used, $J= J_1 \cup J_2 \cup J_3$. We have 4 jobs in $J_1 = \{j_1,\ldots,j_4\}$ with $p_1=\ldots=p_4=1$ using the first resource. We have another 4 jobs in $J_2 = \{j_5,\ldots,j_8\}$ with $p_5=\ldots=p_8=1$ using the second resource. Lastly, we have 4 jobs in $J_3 = \{j_9,\ldots,j_{12} \}$ with $p_9=\ldots=p_{12}=1+\varepsilon$ with $\varepsilon> 0$ using the third resource. \\

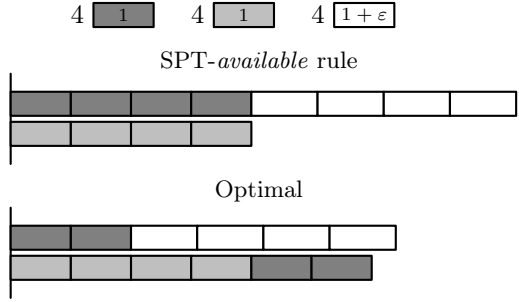
\begin{figure}[ht!]
\centering
\begin{tikzpicture}[line cap=round,line join=round,>=triangle 45,x=0.4cm,y=0.4cm]
\clip(-3,1.7) rectangle (14,3.3);

\draw[color=black] (0,2.5) node [left] {4};
\fill[fill =gray,line width=0.8pt,draw = black] (0.,2.9) rectangle (2.,2.1);
\draw[color=black] (1,2.5) node {\scriptsize $1$};

\draw[color=black] (4,2.5) node [left] {4};
\fill[fill =lightgray,line width=0.8pt,draw = black] (4.,2.9) rectangle (6.,2.1);
\draw[color=black] (5,2.5) node {\scriptsize $1$};

\draw[color=black] (8,2.5) node [left] {4};
\fill[fill =white,line width=0.8pt,draw = black] (8.,2.9) rectangle (10.,2.1);
\draw[color=black] (9,2.5) node {\scriptsize $1+\varepsilon$};
\end{tikzpicture}

{\small SPT-\textit{available} rule\\}

\begin{tikzpicture}[line cap=round,line join=round,>=triangle 45,x=0.4cm,y=0.4cm]
\clip(-1,0.3475062180393946) rectangle (17.5,3.6);

\draw [line width=0.8pt] (0.,0.5)-- (0.,3.5);

\fill[fill =gray,line width=0.8pt,draw = black] (0.,2.9) rectangle (2.,2.1);
\fill[fill =gray,line width=0.8pt,draw = black](2.,2.9)  rectangle (4.,2.1);
\fill[fill =gray,line width=0.8pt,draw = black] (4.,2.9)  rectangle (6.,2.1);
\fill[fill =gray,line width=0.8pt,draw = black] (6.,2.9) rectangle (8.,2.1);

\fill[fill =white,line width=0.8pt,draw = black] (8.,2.9) rectangle (10.2,2.1);
\fill[fill =white,line width=0.8pt,draw = black] (10.2,2.9)  rectangle (12.4,2.1);
\fill[fill =white,line width=0.8pt,draw = black] (12.4,2.9)  rectangle (14.6,2.1);
\fill[fill =white,line width=0.8pt,draw = black] (14.6,2.9)  rectangle (16.8,2.1);

\fill[fill =lightgray,line width=0.8pt,draw = black] (0.,1.1)  rectangle (2.,1.9);
\fill[fill =lightgray,line width=0.8pt,draw = black] (2.,1.1)   rectangle (4.,1.9);
\fill[fill =lightgray,line width=0.8pt,draw = black] (4.,1.1)   rectangle (6.,1.9);
\fill[fill =lightgray,line width=0.8pt,draw = black] (6.,1.1)   rectangle (8.,1.9);
\end{tikzpicture}

{\small Optimal\\}
\begin{tikzpicture}[line cap=round,line join=round,>=triangle 45,x=0.4cm,y=0.4cm]
\clip(-1,0.3475062180393946) rectangle (17.5,3.6);

\draw [line width=0.8pt] (0.,0.5)-- (0.,3.5);

\fill[fill =gray,line width=0.8pt,draw = black] (0.,2.9)  rectangle (2.,2.1);
\fill[fill =gray,line width=0.8pt,draw = black] (2.,2.9)   rectangle (4.,2.1);

\fill[fill =white,line width=0.8pt,draw = black] (4.,2.9)   rectangle  (6.2,2.1);
\fill[fill =white,line width=0.8pt,draw = black] (6.2,2.9)  rectangle  (8.4,2.1);
\fill[fill =white,line width=0.8pt,draw = black] (8.4,2.9)   rectangle (10.6,2.1);
\fill[fill =white,line width=0.8pt,draw = black] (10.6,2.9)   rectangle  (12.8,2.1);

\fill[fill =lightgray,line width=0.8pt,draw = black] (0.,1.1)   rectangle (2.,1.9);
\fill[fill =lightgray,line width=0.8pt,draw = black] (2.,1.1)  rectangle  (4.,1.9);
\fill[fill =lightgray,line width=0.8pt,draw = black] (4.,1.1) rectangle  (6.,1.9);
\fill[fill =lightgray,line width=0.8pt,draw = black] (6.,1.1) rectangle (8.,1.9);

\fill[fill =gray,line width=0.8pt,draw = black] (8.,1.1) rectangle (10.,1.9);
\fill[fill =gray,line width=0.8pt,draw = black] (10.,1.1) rectangle  (12.,1.9);

\end{tikzpicture}
\caption{Optimal and SPT-\textit{available} schedule for Example \ref{SPTnotopt}}
\end{figure}

The SPT-\textit{available} rule will schedule the jobs in $J_1 \cup J_2$ first on the two machines and will then at time 4 schedule the jobs in $J_3$ on one machine one after another. This will result in an objective value of $46+10\varepsilon$. An optimal schedule would be to schedule first two jobs from $J_1$ on the first machine and then all jobs from $J_3$. On the second machine all jobs from $J_2$ are scheduled first and then the last two jobs from $J_1$. This results in a schedule with objective value $42+10\varepsilon$. Hence SPT-\textit{available} is not optimal. 
\end{example}

The SPT-\textit{available} rule is not optimal for $P|\textit{partition}|\sum_j C_j$, but it might give a good approximation. Example \ref{SPTnotopt} gives a type of instance that is hard to tackle for the SPT-\textit{available} rule. We generalize this example to find a lower bound on the approximation factor.

\begin{lemma}
The SPT-\textit{available} rule does not give an $\alpha$-approximation for \\
$P|\textit{partition}|\sum_j C_j$ with $\alpha< \tfrac{4}{3}$.
\end{lemma}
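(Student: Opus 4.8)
The plan is to generalize Example~\ref{SPTnotopt} into a one-parameter family on $m$ machines and then optimize the number of machines and the relative sizes of the job groups. Concretely, I would take $m$ identical machines, introduce $m$ ``short'' resources each carrying $s$ jobs of processing time $1$, and one ``long'' resource carrying $L$ jobs of processing time $1+\varepsilon$. Example~\ref{SPTnotopt} is the case $m=2$, $s=4$, $L=4$; the point of the family is that $m$, $s$ and $L$ can be tuned independently. Throughout I would treat $\varepsilon>0$ as fixed but tiny and $s$ as the growing parameter, so that in all cost expressions the jobs behave like unit jobs up to lower-order terms.

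First I would pin down what the SPT-\emph{available} rule does. Since every long job is strictly longer than every short job, all short jobs precede all long jobs in the list. At time $0$ the $m$ machines grab the $m$ distinct short resources, one per machine, and by the tie-break rule each short train stays on its machine until time $s$. Because every short resource has exactly $s$ jobs, all machines fall idle simultaneously at time $s$ and only long jobs remain; one machine then runs the entire long train (the tie-break keeps it on that machine), while the other $m-1$ machines stay idle. Hence $\mathrm{SPT}=m\cdot\tfrac{s(s+1)}{2}+\sum_{i=1}^{L}\big(s+i(1+\varepsilon)\big)$, which for large $s$ and small $\varepsilon$ is $\tfrac{m}{2}s^2+Ls+\tfrac{1}{2}L^2+O(s)$.

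Next I would exhibit a single feasible schedule to upper bound the optimum. I would run the whole long train on machine $1$ starting at time $0$ and pack the $ms$ short jobs onto the remaining $m-1$ machines, filling the first $\tfrac{ms}{m-1}$ time slots of each. The only thing to check is feasibility of this packing: in each slot one places $m-1$ jobs from $m-1$ \emph{distinct} short resources, cycling which resource is omitted, so that each resource appears in exactly $s$ slots and never overlaps itself. This round-robin assignment is the one genuinely delicate point, and it is what makes the bound go through. The resulting cost is $\mathrm{OPT}\le \tfrac12 L^2+(m-1)\cdot\tfrac12\big(\tfrac{ms}{m-1}\big)^2+O(s)=\tfrac12 L^2+\tfrac{m^2}{2(m-1)}s^2+O(s)$.

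Finally I would optimize the ratio. Writing $u=L/s$ and discarding lower-order terms gives $\mathrm{SPT}/\mathrm{OPT}\to (u^2+2u+m)/(u^2+\tfrac{m^2}{m-1})$; maximizing over $u$ and then over the integer $m$ places the maximum at $m=3$, $u=3$ (that is, $L=3s$), where the ratio equals $\tfrac{9}{27/4}=\tfrac43$. Spelling this out: for $m=3$ and $L=3s$ one gets $\mathrm{SPT}=9s^2+O(s)$ and $\mathrm{OPT}\le\tfrac{27}{4}s^2+O(s)$, so letting $s\to\infty$ and $\varepsilon\to0$ drives $\mathrm{SPT}/\mathrm{OPT}$ to $\tfrac43$ from below. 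Thus for every $\alpha<\tfrac43$ the ratio exceeds $\alpha$ once $s$ is large enough, so SPT-\emph{available} is not an $\alpha$-approximation. The main obstacles are the two verification steps: arguing rigorously that the tie-break rule forces the long train onto a single machine with the other two idle (so that the claimed $\mathrm{SPT}$ value is exact), and checking that the round-robin short-job packing is conflict-free (so that the claimed upper bound on $\mathrm{OPT}$ is legitimate); the optimization over $m$ and $u$ is then routine.
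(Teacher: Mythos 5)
Your proposal is correct and follows essentially the same route as the paper: the paper's lower-bound instance is exactly your optimized case $m=3$, $L=3s$ (there called $3c$ unit jobs versus $3c$ jobs of length $1+\varepsilon$ on one shared resource), with the same ALG/OPT comparison and the same limit $9c^2/\tfrac{27}{4}c^2 = \tfrac43$. The only real difference is that the paper gives each short job its \emph{own} resource rather than grouping them into $m$ trains, which makes the optimal packing of the short jobs onto the remaining two machines trivially feasible and dispenses with your round-robin argument and the tie-breaking analysis for the short phase.
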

\begin{proof}
Consider the instance $\mathcal{I}$ with 3 machines and job set $J = J_A \cup J_B$. $J_A$ consists of $3c$ jobs of length $1$, that all use their own resource, with $c \in \mathbb{Z}^+$ even. The set $J_B$ consist of $3c$ jobs of length $1+\varepsilon$ with $\varepsilon > 0$, that all share the same resource, i.e., $r_j=r_{j'} , \forall j,j'\in J_B$. \\ 
The SPT-\textit{available} rule will first schedule $c$ jobs from $J_A$ on every machine. Then, it will schedule at time $c$ all jobs from $J_B$ on one machine. Let $\texttt{ALG}_{\mathcal{I}}$ be the objective value of the SPT-\textit{available} rule for instance $\mathcal{I}$. Then,
\begin{eqnarray}
\texttt{ALG}_{\mathcal{I}} & = & \tfrac{3}{2} c (c+1) + 3 c^2 + \tfrac{3}{2} c (3c+1) (1+ \varepsilon) \nonumber \\
		& = & 9 c^2 + 3 c + \tfrac{1}{2} (9c^2+3c)\varepsilon \nonumber
\end{eqnarray}
An optimal schedule would schedule the $J_b$ jobs on a single machine and schedule the $J_A$ jobs evenly on the remaining two machine. Let $\texttt{OPT}_{\mathcal{I}}$ be the objective value of the optimal schedule for instance $\mathcal{I}$. Then,
\begin{eqnarray}
\texttt{OPT}_{\mathcal{I}} & = &  \tfrac{3}{2} c (\tfrac{3}{2}c+1) + \tfrac{3}{2} c (3c+1) (1+ \varepsilon) \nonumber \\
		& = & \tfrac{27}{4} c^2 + 3 c + \tfrac{1}{2} (9c^2+3c)\varepsilon \nonumber 
\end{eqnarray}
Thus,
\begin{eqnarray}
\frac{\texttt{ALG}_{\mathcal{I}}}{\texttt{OPT}_{\mathcal{I}}} & \geq &  \lim_{c \rightarrow \infty}  \lim_{\varepsilon \rightarrow 0} \frac{9 c^2 + 3 c + \tfrac{1}{2} (9c^2+3c)\varepsilon}{\tfrac{27}{4} c^2 + 3 c + \tfrac{1}{2} (9c^2+3c)\varepsilon} \nonumber \\
		& = & \lim_{c \rightarrow \infty}  \frac{9 c^2 + 3 c }{\tfrac{27}{4} c^2 + 3 c} = \frac{4}{3} \nonumber
\end{eqnarray}

\end{proof}

We will proceed by giving an upper bound to the approximation ratio by using an approach similar to the approach used by \citet{chekuri2001approximation} to show a 2-approximation for the problem of minimizing weighted completion time on $m$ parallel machines with in-tree precedence constraints. We begin by defining the minimum completion time of every job, based on the fact that by Theorem \ref{SPTorderOPT} all jobs sharing the same resource must be processed in SPT-order. Without loss of generality we can assume that the jobs are ordered according to their processing times, $j_1\prec j_2 \prec \cdots \prec j_n$, breaking ties arbitrarily.

\begin{definition}
The \emph{minimum completion time} $k_j$ for each job $j$ is given by 
$$k_j = p_j + \sum_{j'|r_{j'}=r_{j} \text{ and } p_{j'} \prec p_j} p_{j'}.$$ 
\end{definition}

Define $\text{OPT}^m$ as the optimal value for a given instance of jobs on $m$ machines and let $\text{OPT}^m_{\text{res}}$ be the optimal value for the instance of jobs on $m$ machines with \textit{partition} constraints. Clearly, $\text{OPT}^1 = \text{OPT}^1_{\text{res}}$ for each instance since the additional constraints do not interfere with the optimal schedule for one machine. Notice that the optimal schedule for one machine and for parallel machines is created by the SPT rule \cite{conway1967}. Let $C^1_j$ denote the completion time of job $j$ in an optimal schedule using one machine (with or without \textit{partition} constraints) and $C^m_j$ the completion time of job $j$ in an optimal schedule for $m$ machines without \textit{partition} constraints. 

\begin{lemma}\label{opt1optm}
$\frac{1}{m}\text{OPT}^1 \le\text{OPT}^m \le \text{OPT}^m_{\text{res}}$ for each instance.
\end{lemma}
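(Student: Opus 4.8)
The plan is to prove the two inequalities separately. The right-hand inequality $\text{OPT}^m \le \text{OPT}^m_{\text{res}}$ is a pure relaxation statement: every schedule that is feasible for the $m$-machine instance \emph{with} \textit{partition} constraints is in particular feasible for the same instance \emph{without} them, since dropping the requirement that same-resource jobs not overlap can only enlarge the set of feasible schedules. Hence the minimum total completion time over the larger feasible set, $\text{OPT}^m$, is at most the minimum over the smaller set, $\text{OPT}^m_{\text{res}}$. Nothing about SPT is needed here.

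For the left-hand inequality I would rewrite it as $\text{OPT}^1 \le m\,\text{OPT}^m$ and exploit that SPT is optimal both on one and on $m$ identical machines (\cite{conway1967}), and that by Lemma \ref{idletime} we may assume the optimal schedules have no idle time. With the jobs indexed in SPT order $j_1 \preceq j_2 \preceq \cdots \preceq j_n$ (as already fixed before the lemma), the key idea is to express each optimum as a nonnegative-weighted sum of the $p_{j_l}$: in a no-idle-time schedule $\sum_j C_j$ equals $\sum_i p_{j_i}$ weighted, for each $i$, by the number of jobs occupying the same machine as $j_i$ at or after its position (itself included). On a single machine this multiplicity is $n-l+1$ for $j_l$, giving $\text{OPT}^1 = \sum_{l=1}^n (n-l+1)\,p_{j_l}$.

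Then I would determine the corresponding multiplicities for the optimal $m$-machine SPT schedule, where the shortest jobs receive the largest multiplicities. A short computation of the balanced (round-robin) SPT assignment shows that the coefficient of $p_{j_l}$ is $1 + \lfloor (n-l)/m\rfloor$, so $\text{OPT}^m = \sum_{l=1}^n \bigl(1 + \lfloor (n-l)/m\rfloor\bigr)\,p_{j_l}$. Since every $p_{j_l} \ge 0$, it suffices to prove the inequality coefficient by coefficient, i.e.\ $n-l+1 \le m\bigl(1 + \lfloor (n-l)/m\rfloor\bigr)$ for each $l$. Writing $k := n-l = m\lfloor k/m\rfloor + (k \bmod m)$ with $0 \le k \bmod m \le m-1$ turns this into $k+1 \le m + m\lfloor k/m\rfloor$, which holds because $k+1 = m\lfloor k/m\rfloor + (k\bmod m) + 1 \le m\lfloor k/m\rfloor + m$. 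Summing against the nonnegative $p_{j_l}$ yields $\text{OPT}^1 \le m\,\text{OPT}^m$, as required.

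I expect the main obstacle to be the rigorous justification of the $m$-machine coefficient $1+\lfloor (n-l)/m\rfloor$, which rests on the precise round-robin structure of an optimal SPT schedule (and on the rearrangement/balancing argument that makes it optimal) rather than on SPT optimality alone. If that bookkeeping is undesirable, a cleaner alternative is to avoid the exact formula and prove only the lower bound $\text{OPT}^m \ge \tfrac{1}{m}\text{OPT}^1$ directly: partition the jobs into the $\lceil n/m\rceil$ consecutive SPT blocks of $m$ jobs and argue that each block contributes to $\text{OPT}^m$ at least $1/m$ of what it contributes to $\text{OPT}^1$. Either way the final floor inequality is routine; the care lies only in setting up the weighted-sum representation of $\sum_j C_j$ correctly.
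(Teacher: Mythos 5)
Your proof is correct, and the right-hand inequality is handled exactly as in the paper (a pure relaxation argument). For the left-hand inequality the paper takes a lighter route that sidesteps precisely the bookkeeping you flag as the main obstacle: instead of computing $\text{OPT}^m$ exactly via the round-robin coefficients $1+\lfloor (n-l)/m\rfloor$, it argues per job. With the jobs in SPT order and $P_j=\sum_{i\le j}p_i$, the single-machine optimum satisfies $C_j^1=P_j$, while in the $m$-machine optimum $C_j^m\ge \frac{1}{m}P_j$, since the $j$ shortest jobs constitute $P_j$ units of work spread over $m$ machines and, SPT being optimal, $j$ is the last of them to finish; summing $C_j^1\le m\,C_j^m$ over all $j$ gives the claim. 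This is the same term-by-term comparison as yours --- indeed $\sum_j P_j=\sum_l (n-l+1)p_{j_l}$, so the two computations are the same double sum read in two orders --- but the paper needs only a one-sided \emph{lower bound} on the $m$-machine completion times, not the exact value of $\text{OPT}^m$, so no justification of the round-robin structure or rearrangement argument is required. Your coefficient inequality $k+1\le m+m\lfloor k/m\rfloor$ is fine and your fallback block decomposition would also work, but the per-job bound is the cleanest way to avoid the step you yourself identify as delicate.
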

\begin{proof}
Clearly, the last inequality holds, as an optimal schedule for the problem with constraints is always a feasible solution to the problem without the partition constraints. Hence, we only have to show the first inequality. \\
Sort the jobs from small to large processing times. Let $j$ be arbitrary but fixed job and let $P_j = \sum_{i=1}^j p_i$ be the sum of all processing times of the jobs that have a smaller or equal processing time. Clearly $C_j^m \geq \tfrac{1}{m} P_j$, since $\tfrac{1}{m} P_j$ is the earliest time $j$ can finish. We also have that $P_j = C_j^1$, since the SPT-rule is optimal. Thus, for every job $j$ we also have that $\tfrac{1}{m} C_j^1 \leq C_j^m$ from which the first inequality follows.  
\end{proof}

We can now prove the upper bound for the SPT-\textit{available} rule. 

\begin{theorem} \label{2approx}
The SPT-\textit{available} rule gives a $\left(2-\tfrac{1}{m}\right)$-approximation for\\ $P|\textit{partition}|\sum_j C_j$.
\end{theorem}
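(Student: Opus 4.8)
The plan is to bound the completion time of each job individually in the SPT-\textit{available} schedule and then sum, playing two lower bounds on $\text{OPT}^m_{\text{res}}$ against each other. Both lower bounds are essentially already in hand. From Theorem \ref{SPTorderOPT} every feasible schedule processes the jobs sharing a resource in SPT-order, so each such job must wait out all shorter jobs on its resource before it can even start; this gives $C_j \ge k_j$ in \emph{every} feasible schedule, hence $\sum_j k_j \le \text{OPT}^m_{\text{res}}$. From Lemma \ref{opt1optm}, together with the fact that the single-machine SPT value equals $\text{OPT}^1=\sum_j P_j$ (where $P_j=\sum_{i=1}^j p_i$), we obtain $\tfrac1m\sum_j P_j \le \text{OPT}^m \le \text{OPT}^m_{\text{res}}$.

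The heart is a per-job bound $C^{\mathrm{ALG}}_j \le \tfrac1m P_j + \left(1-\tfrac1m\right)k_j$. Writing $s_j = C^{\mathrm{ALG}}_j - p_j$ for the start time of $j$, I would first record the exact identity $C^{\mathrm{ALG}}_j = k_j + \phi_j$, where $\phi_j$ is the total length of time inside $[0,s_j)$ during which the resource $r_j$ is \emph{free}. This holds because the jobs on $r_j$ preceding $j$ occupy the resource for exactly $k_j-p_j$ time units before $s_j$ (they are shorter, precede $j$, and cannot overlap it), so $s_j$ splits into $k_j-p_j$ units of resource-busy time plus $\phi_j$ units of resource-free time, and $C_j=s_j+p_j=(k_j-p_j)+\phi_j+p_j$. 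Because the SPT-\textit{available} schedule contains no idle time (Lemma \ref{idletime}), at every resource-free instant before $s_j$ all $m$ machines must be busy --- otherwise $j$, being available, would be scheduled --- so the total work executed during the $\phi_j$ time units is exactly $m\phi_j$.

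The crux is to argue that all of this $m\phi_j$ work belongs to jobs that (i) precede $j$ in the SPT-order and (ii) do not lie on $r_j$; their combined processing time is precisely $P_j-k_j$, giving $m\phi_j\le P_j-k_j$ and the stated bound. Part (ii) is immediate, since no job on $r_j$ can run while $r_j$ is free. Part (i) --- that no job later than $j$ in the list is ever processed while $j$ is available but unscheduled --- is the main obstacle, and I would prove it by contradiction at the \emph{first} instant $t$ where it fails. At such $t$ the resource $r_j$ must have just become free (had it been free earlier, the violation would have occurred earlier), so a job on $r_j$ finishes at $t$ and frees a machine; moreover any job $j'\prec j$ that is available at $t$ and blocks $j$ from a machine must itself have just become free at $t$ (otherwise it, paired with the long running job, would constitute an earlier violation), freeing its own machine. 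Hence the number of machines freed at $t$ exceeds the number of available predecessors of $j$ by at least one, so after the rule places those predecessors a machine remains and $j$ is scheduled --- a contradiction. This one extra freed machine, supplied by $j$'s own resource-predecessor, is exactly what forces $\phi_j$ to be assembled only from shorter, off-resource work.

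Finally I would sum the per-job bound and substitute the two lower bounds:
\[
\sum_j C^{\mathrm{ALG}}_j \;\le\; \tfrac1m\sum_j P_j + \left(1-\tfrac1m\right)\sum_j k_j \;\le\; \text{OPT}^m_{\text{res}} + \left(1-\tfrac1m\right)\text{OPT}^m_{\text{res}} \;=\; \left(2-\tfrac1m\right)\text{OPT}^m_{\text{res}}.
\]
The only genuinely delicate ingredient is the structural claim (i); everything else is bookkeeping resting on the identity $C^{\mathrm{ALG}}_j=k_j+\phi_j$, on the no-idle property of Lemma \ref{idletime}, and on the two lower bounds from Lemma \ref{opt1optm} and Theorem \ref{SPTorderOPT}.
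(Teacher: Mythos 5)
Your proposal is correct, and its skeleton coincides with the paper's: both establish the per-job inequality $C^G_j \le \left(1-\frac{1}{m}\right)k_j + \frac{1}{m}C^1_j$ (your $P_j$ is exactly $C^1_j$) and then sum it against the same two lower bounds, $\sum_j k_j \le \text{OPT}^m_{\text{res}}$ and $\frac{1}{m}\text{OPT}^1 \le \text{OPT}^m \le \text{OPT}^m_{\text{res}}$ from Lemma \ref{opt1optm}. Where you genuinely diverge is in how the per-job inequality is proved. The paper argues by induction along the SPT list, splitting on whether the job immediately preceding $j$ on its machine shares $j$'s resource, and in the different-resource case asserts that the start time of $j$ is at most $\frac{1}{m}C^1_{j''}$. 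You instead prove the exact identity $C^G_j = k_j + \phi_j$ and charge the $m\phi_j$ units of work performed while $r_j$ is idle to the off-resource predecessors of $j$, whose total length is $P_j-k_j$; this hinges on your structural claim (i) that no successor of $j$ in the list ever runs while $j$ is available and unscheduled. That claim is true, and your minimal-counterexample counting argument for it is sound: each predecessor that steals a machine at the critical instant must have its own resource released at that same instant by a distinct job finishing on a distinct machine, so the machine vacated by $j$'s resource-predecessor is a surplus and $j$ would be scheduled after all. The one point to tighten in a full write-up is that the minimality must be taken over violation instants for \emph{all} jobs simultaneously, not only for the fixed $j$, since the ``earlier violation'' you invoke concerns a different job $j'\prec j$. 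In effect your route makes explicit the structural fact on which the paper's inductive step quietly relies, at the price of a more delicate argument; the concluding bookkeeping is identical in both proofs.
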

\begin{proof}
Let $C_j^G$ be the completion time of job $j$ in a schedule created by the SPT-\textit{available} rule. We begin by proving by induction that 

\begin{equation}\label{2approx_eq1}
C_j^G \le \left(1-\frac{1}{m}\right)k_j + \frac{1}{m}C_j^1, \quad \forall j \in J.
\end{equation}

The jobs that are the first to be scheduled on a machine are also the first of their resource. Therefore, 
\begin{align*}
C_j^G &= p_j \\
&= \left( 1 - \frac{1}{m}\right) p_j + \frac{1}{m} p_j \\
&= \left(1-\frac{1}{m}\right)k_j + \frac{1}{m} C_j^1\\
\end{align*}
in that case.\\
Now assume that Equation (\ref{2approx_eq1}) holds for any job $j' \prec j$. Then, in particular, it also holds for the job $j'$ that is scheduled right before job $j$ on the same machine. We distinguish the following two cases:

\begin{itemize}
\item \textit{$j'$ and $j$ use the same resource..} $j$ is scheduled right after $j'$ and thus
\begin{align*}
C_j^G &= C_{j'}^G + p_j \\
&\le \left(1-\frac{1}{m}\right)k_{j'} + \frac{1}{m}C_{j'}^1 + p_j  &\text{(by induction)}\\
&= \left(1-\frac{1}{m}\right)(k_{j'}+p_j) + \frac{1}{m}(C_{j'}^1 +p_j) \\
&\le \left(1-\frac{1}{m}\right)k_j + \frac{1}{m}C_{j}^1. \\
\end{align*}

\item \textit{$j'$ and $j$ use different resources.} This implies that $j$ was scheduled at the first possible free machine and not at a later possibility because of the resource constraints. Define job $j''$ as the job that is right before $j$ in the schedule for one machine, i.e. the last $j''$ in the ordering such that $j'' \prec j$. For the starting time of job $j$ (equal to $C^G_{j'}$ ) it then holds that $C^G_{j'} \le \frac{1}{m} C^1_{j''}$. Using this we see that: 

\begin{align*}
C_j^G &= C_{j'}^G + p_j \\
&\le \frac{1}{m}C_{j''}^1 + p_j \\
&= \left(1-\frac{1}{m}\right)p_j + \frac{1}{m}(C_{j''}^1 +p_j) \\
&\le \left(1-\frac{1}{m}\right)k_j + \frac{1}{m}C_{j}^1 &(\text{since } p_j \le k_j  \,\forall j). \\
\end{align*}

\end{itemize}
Hence, we can conclude that (\ref{2approx_eq1}) holds for all jobs $j \in J$.\\

Note that $\sum_jk_j \le OPT^m_{res}$, since $k_j \le C_j$ in any feasible schedule for $P|\textit{partition}|\sum_jC_j$ by the definition of the minimal completion time $k_j$. 
Using equation (\ref{2approx_eq1}) we get:


\begin{align*}
\sum_j C_j^G &\le \sum_j \left(1-\frac{1}{m}\right)k_j + \sum_j \frac{1}{m}C_j^1 & \text{(using (\ref{2approx_eq1}))}\\
&= \left(1-\frac{1}{m}\right) \sum_j  k_j + \frac{1}{m} \sum_j C_j^1\\
&\le \left(1-\frac{1}{m}\right)OPT^m_{\text{res}} + OPT^m_{\text{res}} &\text{(Lemma \ref{opt1optm})} \\
&\le \left(2-\frac{1}{m}\right)OPT^m_{\text{res}}
\end{align*}

\end{proof}

\section{Machine subset constraints}
Since we do not know the complexity of $P|\textit{partition}|\sum_j C_j$, it is interesting to look at related problems. We will look at several of these related problems. We begin by considering the problem where jobs that share the same resource can only be processed on a subset of the machines.\\ 

We can add processing set restrictions by adding $\mathcal{M}_j$ to the $\beta$ field of a scheduling problem, as found in \cite{leung2008scheduling}. This means that for each job $j$, there is a set $\mathcal{M}_j \subseteq \{1,...,m\}$ such that $j$ can only be scheduled on machines in $\mathcal{M}_j$. Let us define a variation called processing set restrictions for resources as follows: for each resource $r\in \mathcal{R}$ there is a set $\mathcal{M}_r \subseteq \{1,...,m\}$ such that all jobs sharing resource $r$ can only be scheduled  on machines in $\mathcal{M}_r$. We denote these restrictions as $\mathcal{M}_r$ in the $\beta$ field. 

\begin{corollary}$P | \textit{partition}, \mathcal{M}_r, p_j =1 | \sum_jC_j$ is polynomially solvable.
\end{corollary}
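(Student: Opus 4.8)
The plan is to reuse the Min-Cost Flow construction from Theorem~\ref{res_p1} essentially verbatim, modifying only the arc set so that the machine-availability restrictions $\mathcal{M}_r$ are respected. Recall that in that construction a job flows $s \to v_J \to v_{r,p} \to v'_{r,p} \to (M,p) \to t$, where the middle transition $v'_{r,p} \to (M,p)$ connects a (resource, position) node to every (machine, position) node sharing the same position $p$. This is precisely the step where a resource is assigned to a concrete machine, so it is the natural place to encode the constraint.

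\begin{proof}
We adapt the Min-Cost Flow construction of Theorem~\ref{res_p1}. We keep the node sets $V_J$, $V_{\text{res},\text{pos}}$, $V'_{\text{res},\text{pos}}$ and $V_{M,\text{pos}}$, together with the source $s$ and sink $t$, unchanged. All arcs from $s$ to $V_J$, from $V_J$ to $V_{\text{res},\text{pos}}$ (respecting the resource needed by each job), and from $V_{\text{res},\text{pos}}$ to $V'_{\text{res},\text{pos}}$ remain exactly as before, as do the cost-$p$ arcs from $V_{M,\text{pos}}$ to $t$. The only change is in the arcs leaving $V'_{\text{res},\text{pos}}$: from a node $v'_{r,p}$ we now construct an arc to a node $(M,p) \in V_{M,\text{pos}}$ if and only if both the positions agree and $M \in \mathcal{M}_r$, i.e. machine $M$ is permitted for resource $r$.

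With this single modification, any integral $s$--$t$ flow of value $n$ still decomposes into $n$ unit paths, each of the form $s \to v_j \to v_{r,p} \to v'_{r,p} \to (M,p) \to t$, and each such path now additionally certifies that the machine $M$ assigned to the job $j$ (which uses resource $r$) satisfies $M \in \mathcal{M}_r$. Conversely, any feasible schedule for $P \mid \textit{partition}, \mathcal{M}_r, p_j=1 \mid \sum_j C_j$ assigns each job a machine from $\mathcal{M}_{r_j}$, so the corresponding collection of unit paths uses only the arcs present in the modified network. The capacity-$1$ arcs through $V_{\text{res},\text{pos}}$ and $V'_{\text{res},\text{pos}}$ still guarantee that no resource is used twice at the same position, and the capacity-$1$ arcs into $V_{M,\text{pos}}$ still guarantee that no machine holds two jobs at the same position; since $p_j=1$ and there is no idle time by Lemma~\ref{idletime}, positions coincide with completion times and the cost of a path equals the completion time $C_j$ of its job. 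Hence an optimal integral minimum-cost flow of value $n$ corresponds to an optimal schedule, and since the Min-Cost Flow problem is polynomially solvable and the constructed network has polynomially many nodes and arcs, the result follows.
\end{proof}

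I expect no serious obstacle here: the entire difficulty is already dispatched in Theorem~\ref{res_p1}, and the processing-set restriction is exactly the kind of constraint that a flow network handles by deleting arcs. The only points requiring care are confirming that the integrality of min-cost flow still yields a valid schedule (standard, since all capacities are integral) and that the position/completion-time identification continues to hold, which it does because setting $p_j=1$ and invoking Lemma~\ref{idletime} is unaffected by the machine restrictions.
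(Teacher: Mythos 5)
Your proposal is correct and matches the paper's argument exactly: the paper also derives this corollary from Theorem~\ref{res_p1} by including an arc from $v'_{r,p}$ to $v_{i,p}$ only when $i \in \mathcal{M}_r$. Your write-up simply spells out the correspondence in more detail than the paper does.
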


This is a consequence of Theorem \ref{res_p1}. One could use the same algorithm, but only include edges $v'_{r,p}$ to $v_{i,p}$ if
$i \in \mathcal{M}_r$. \\

Consider the following NP-complete problem from \cite{garey1979c}.
\begin{definition}{3-PARTITION}\label{3part} Given positive integers $m$ and $b$, and a multiset of $3m$ integers $A$ with $\sum_{a \in A} a = mb$, and $b/4 \le a \le b/2$ for all $a \in A$, does there exist a partition $(A_1,...,A_m)$ of $A$ into 3 element sets such that for each $i$, $1 \le i \le m$, $\sum_{a \in A_i} a = b$?
\end{definition}

This problem is NP hard in the strong sense. Using this, we will prove the following theorem.

\begin{theorem}$P | \textit{partition}, \mathcal{M}_r| \sum_j C_j$ is NP hard in the strong sense.\label{NPhardMr} 
\end{theorem}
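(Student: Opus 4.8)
The plan is to reduce from 3-PARTITION, which is NP-hard in the strong sense, so that a pseudo-polynomial reduction using unary-sized processing times still yields strong NP-hardness. Given an instance of 3-PARTITION with integers $m$, $b$, and multiset $A = \{a_1, \ldots, a_{3m}\}$, I would build a scheduling instance with $m$ machines. The core idea is to create, for each integer $a_i \in A$, a resource $r_i$ together with a ``bundle'' of jobs all sharing $r_i$ whose total processing time encodes $a_i$; since jobs sharing a resource cannot run in parallel and (by Theorem~\ref{SPTorderOPT}) must run in SPT order, such a bundle behaves like a single contiguous block of work of length $a_i$ that must sit on one machine at a time. The machine-subset restrictions $\mathcal{M}_r$ are the key extra lever: I would use them to force structural constraints that make a balanced assignment the only way to hit a target objective value. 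One clean realization is to give every ``integer resource'' $r_i$ the full machine set $\mathcal{M}_{r_i} = \{1, \ldots, m\}$, while introducing auxiliary ``blocker'' resources, each restricted to a single machine via $\mathcal{M}_r = \{i\}$, whose jobs pre-occupy prescribed time slots on that machine. The blockers shape each machine's available capacity so that the only schedules achieving the target cost are exactly those that place three integer-bundles summing to $b$ on each machine.

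The main steps, in order, are as follows. First I would fix the encoding: each $a_i$ becomes a block of work of length $a_i$ tied to resource $r_i$, and I would choose the blocker jobs and their single-machine restrictions so that each of the $m$ machines has exactly a length-$b$ window of usable time (after a large common prefix of blocker work that pins down completion times). Second, I would compute the target objective value $B^\star$ that a perfectly balanced schedule --- three blocks of total length $b$ per machine, all finishing at the same set of completion times --- would achieve; by the convexity of the completion-time sum, any deviation from the balanced assignment strictly increases the objective. Third, for the forward direction I would show that a valid 3-partition $(A_1, \ldots, A_m)$ yields a schedule of cost exactly $B^\star$: assign $A_i$'s three bundles to machine $i$, where they exactly fill the length-$b$ window. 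Fourth, for the reverse direction I would argue that any schedule of cost at most $B^\star$ must place, on each machine, integer-bundles whose lengths sum to exactly $b$ --- because the constraint $b/4 \le a \le b/2$ forces exactly three bundles per machine and any imbalance pushes the cost above $B^\star$ --- thereby recovering a valid 3-partition.

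The hard part will be the reverse direction, specifically ruling out ``cheating'' schedules that split a single integer-bundle's jobs across machines or interleave bundles of different resources to beat the target. I expect to need the $\mathcal{M}_r$ restrictions together with Theorem~\ref{SPTorderOPT} and Lemma~\ref{idletime} to make this airtight: the SPT-order property ensures each resource's jobs form a coherent train, the no-idle-time property forces machines to be packed, and the single-machine blockers prevent any integer-bundle from being profitably scattered. I would likely make the bundle for $a_i$ consist of $a_i$ unit jobs (keeping all numbers polynomial in the unary-encoded input, preserving strong NP-hardness), and calibrate the blocker prefix to be long enough that the marginal cost of a single misplaced unit dominates any savings. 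The delicate calculation will be verifying that the balanced objective $B^\star$ sits strictly below the objective of every non-3-partition assignment; this is a convexity argument that I would carry out once the exact completion-time bookkeeping is pinned down, but I would not grind through it here.
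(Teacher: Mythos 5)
Your high-level frame (reduce from 3-PARTITION, use $\mathcal{M}_r$ to pin auxiliary jobs to machines, set a cost threshold) matches the paper's, but the mechanism you propose for the reverse direction does not work, for three compounding reasons. First, the central ``convexity'' claim is false for the objective $\sum_j C_j$: if each machine carries three atomic jobs $x\le y\le z$ starting after a common prefix, its contribution is $3x+2y+z$ plus a constant, so the total depends only on which jobs receive rank-coefficients $3,2,1$ and not at all on whether the per-machine loads equal $b$. A no-instance can therefore realize exactly the same $a$-job cost as a balanced yes-instance, and load imbalance carries no penalty from the $a$-jobs themselves. Second, encoding $a_i$ as $a_i$ unit jobs sharing resource $r_i$ with $\mathcal{M}_{r_i}=\{1,\dots,m\}$ makes things worse: the \emph{partition} constraint only forbids simultaneity, not machine changes, so a bundle behaves like a chain of length $a_i\le b/2$ that may be wrapped across machines. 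With $3m$ such chains of total length $mb$ on $m$ machines, a McNaughton-style wrap-around keeps every machine busy for exactly $b$ time units for \emph{every} instance, so the balanced target cost is always attainable and the reduction cannot distinguish yes- from no-instances. Third, blockers occupying a common prefix of length $L$ on each machine add $L$ times the (fixed) number of subsequent jobs to every schedule uniformly, so they cannot supply the missing penalty either.

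The paper's construction shows what is actually needed: the penalty must sit \emph{after} the interval $[0,b]$, not before it. It uses $2m$ machines; each integer becomes a single non-preemptable job of length $a$ restricted to machines $1,\dots,m$; for each $i$ there are $N_C=2mb$ long ``$C$''-jobs sharing a resource with a length-$b$ ``release'' job (so they cannot start before $b$) and restricted by $\mathcal{M}_r$ to $\{i,m+i\}$, while a huge ``$D$''-job occupies machine $m+i$. This forces the $C$-train onto machine $i$ starting at time $b$, so if any $a$-job spills past $b$ it delays all $N_C$ $C$-jobs by at least one unit, costing $2mb$ --- more than the at most $\tfrac14 mb + mb$ that could ever be saved elsewhere. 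If you want to repair your proposal, you must replace the prefix blockers and the convexity argument with some such ``many long jobs that must start at time $b$ and cannot escape to another machine'' gadget, and you must keep the integer jobs atomic rather than splitting them into unit jobs.
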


\begin{proof}
Assume we have an instance of 3-PARTITION. Since 3-PARTITION is NP-complete in the strong sense, we may assume that $mb$ is bounded by a polynomial in $m$, which is crucial for our proof. Define $N_c = 2mb$ and $C = 8mb$. Create an instance of $P | partition, \mathcal{M}_r| \sum_j C_j$ with $2m$ machines and the following jobs:

\begin{itemize}
\item for all $a\in A$ make job $a_j$, with processing time $p_{a_j} =a$, a unique resource $r(a_j)$ and $\mathcal{M}_{r(a_j)} = \{1,2,...,m\}$. These jobs represent the integers that should be partitioned over the first $m$ machines.
\item for all $  1\le i \le m$ make $N_c$ jobs called `$C$'-jobs with processing time $C$, resource $i$ and $\mathcal{M}_{i} = \{i,m+i\}$, so for each $i$, there are $N_c$ jobs with length $C$, all sharing the same resource, that can only be scheduled on machines $i$ and $m+i$.
\item for all $  1\le i \le m$ make job $r_i$, also called a release date job with processing times $p_{r_i} =b$ and resource $i$, so it shares its resource with a sequence of `$C$'-jobs and can only be scheduled on machines $i$ and $m+i$. 
\item for all $  1\le i \le m$ make job $D_i$, also called a `$D$'-job, with processing time $p_{D_i} = N_C^2C$, resource $r(D_i)$ and $\mathcal{M}_{r(D_i)} = \{m+i\}$, so its resource is unique and can only be scheduled on machine $m+i$.  
\end{itemize}

Define $Z^+ = mb + m \left(N_C b + (C+N_C  C) \frac{N_C}{2} \right) + m(b + N_c^2C) + 2mb$. We will show that the optimal schedule for the scheduling problem has objective value $Z^* \le Z^+$ if and only if the 3-PARTITION instance is a yes-instance. \\

Assume that the 3-PARTITION instance is a yes-instance, then the following schedule is a feasible solution: We can find $A_i$ with $|A_i| =3$ s.t. $\sum_{a \in A_i}^m a = b$ for all $i$. Schedule each of these $A_i$ at the beginning of one of the first $m$ machines. Process the jobs from small to big in their processing times per machine. Start the release date jobs $r_i$ at machines $m+i$ at $t=0$. Process the `$C$'-jobs from $t=b$ and onwards at the first $m$ machines. Start each `$D$' job $D_i$ at machine $m+i$ at $t=b$. For a visualization of this schedule see Figure \ref{Feasible Solution}. The objective value of such a solution is equal to 

$$Z_{feas} = \underbrace{\vphantom{ \left(\frac{N_C}{2}\right)} m\cdot b}_{\text{release date jobs}} + \underbrace{m \left(N_C \cdot b + (C+N_C \cdot C) \frac{N_C}{2} \right)}_{`C'\text{ jobs}} + \underbrace{\vphantom{ \left(\frac{N_C}{2}\right)} m\left(b+N_C^2\cdot C\right)}_{`D' \text{ jobs}} + \underbrace{\vphantom{ \left(\frac{N_C}{2}\right)} Z_A}_{a_j \text{ jobs}}, $$ 

with $\frac{7}{4} mb \le Z_A \le 2 mb$ since the following: $\frac{b}{4} \le a_j \le \frac{b}{2}$, and the $a_j$ jobs are sorted from small to big in their processing times per machine, so the worst case scenario is if $A_i$ only has jobs of processing times $\frac{b}{3}$ and the best case scenario is if $A_i$ has jobs of processing times $\frac{b}{4}, \frac{b}{4}$ and $\frac{b}{2}$. Since $Z_{feas} \le Z^+$, we can conclude that $Z^*\le Z^+$.\\


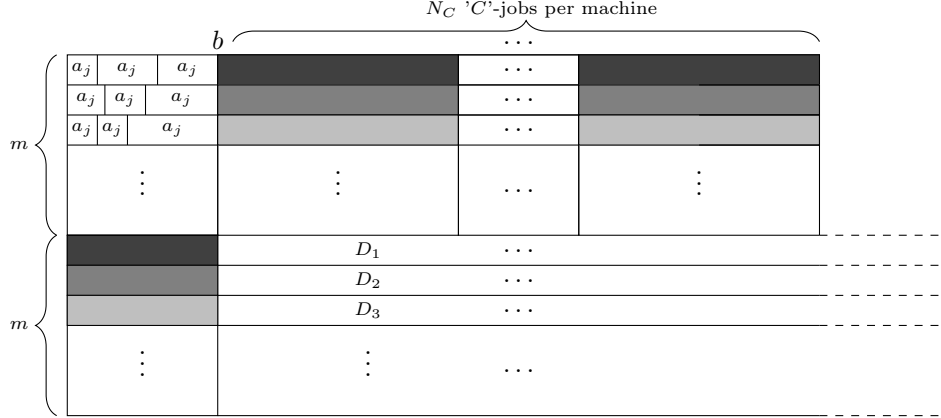
\begin{figure}[h!]
\centering
\begin{tikzpicture}[scale = 0.4]

\draw (0,0) rectangle (5,6);

\fill[fill=darkgray, draw = black] (0,5) rectangle (5,6);
\fill[fill=gray,  draw = black] (0,4) rectangle (5,5);
\fill[fill=lightgray,  draw = black] (0,3) rectangle (5,4);
\node at (2.5,2){$\vdots$};

\draw (0,6) rectangle (5,12);
\draw (0,11) -- (5,11);
\draw (0,10) -- (5,10);
\draw (0,9) -- (5,9);

\draw (1,11) -- (1,12);
\node at (0.5,11.5) { \scriptsize $a_j$};
\draw (3,11) -- (3,12);
\node at (2,11.5) { \scriptsize $a_j$};
\node at (4,11.5) { \scriptsize $a_j$};

\draw (1.25,10) -- (1.25,11);
\node at (0.65,10.5) { \scriptsize $a_j$};
\draw (2.6,10) -- (2.6,11);
\node at (1.925,10.5) { \scriptsize $a_j$};
\node at (3.8,10.5) { \scriptsize $a_j$};

\draw (1,9) -- (1,10);
\node at (0.5,9.5) { \scriptsize $a_j$};
\draw (2,9) -- (2,10);
\node at (1.5,9.5) { \scriptsize $a_j$};
\node at (3.5,9.5) { \scriptsize $a_j$};

\node at (2.5,8) {$\vdots$};

\draw (5,6) rectangle (13,12);
\fill[fill =darkgray, draw = black] (5,11) rectangle (13,12);
\fill[fill=gray,  draw = black] (5,10) rectangle (13,11);
\fill[fill=lightgray,  draw = black] (5,9) rectangle (13,10);
\node at (9,8) {$\vdots$};

\node at (5,12.5) {$b$};

\draw (13,11) rectangle (17,12);
\draw (13,10) rectangle (17,11);
\draw (13,9) rectangle (17,10);
\node at (15,11.5) {$\hdots$};
\node at (15,10.5) {$\hdots$};
\node at (15,9.5) {$\hdots$};
\node at (15,7.5) {$\hdots$};
\node at (15,5.5) {$\hdots$};
\node at (15,4.5) {$\hdots$};
\node at (15,3.5) {$\hdots$};
\node at (15,1.5) {$\hdots$};
\node at (15,12.4) {$\hdots$};

\draw (13,6) rectangle (17,12);
\draw (25,12) -- (17,12) -- (17,6) -- (25,6);

\fill[fill =darkgray,draw = black] (17,11) rectangle (25,12);
\fill[fill=gray,  draw = black] (17,10) rectangle (25,11);
\fill[fill=lightgray,  draw = black] (17,9) rectangle (25,10);

\draw (21,11) -- (25,11);
\draw (21,10) -- (25,10);
\draw (21,9) -- (25,9);
\draw (25,11) -- (25,6);

\draw [dashed] (25,6) -- (29,6);
\draw (5,5) -- (25,5);
\draw [dashed] (25,5) -- (29,5);
\draw (5,4) -- (25,4);
\draw [dashed] (25,4) -- (29,4);
\draw (5,3) -- (25,3);
\draw [dashed] (25,3) -- (29,3);
\node at (10,5.5) {\scriptsize $D_1$};
\node at (10,4.5) {\scriptsize $D_2$};
\node at (10,3.5) {\scriptsize $D_3$};
\node at (10,2) {$\vdots$};
\draw (5,0) -- (25,0);
\draw [dashed] (25,0) -- (29,0);

\node at (21,8) {$\vdots$};

\draw[decorate,decoration={brace,raise=5pt,amplitude=8pt}]
    (5.5,12)--(25,12) ;
\node at (15.75,13.5) { \scriptsize $N_C$ '$C$'-jobs per machine};
\draw[decorate,decoration={brace,raise=4pt,amplitude=8pt}]
    (0,0)--(0,6) ;
\node at (-1.6,3) {\scriptsize $m$};
\node at (-1.6,9) {\scriptsize $m$};
\draw[decorate,decoration={brace,raise=4pt,amplitude=8pt}]
    (0,6)--(0,12) ;


\end{tikzpicture}
\caption{Feasible solution in the case of a yes-instance.}
\label{Feasible Solution}
\end{figure}

We will show that if the 3-PARTITION instance is a no-instance, the optimal schedule has an objective value $Z^* > Z^+$. Let $Z^* = Z^*_r + Z^*_C + Z^*_D +Z^*_a$ with $Z^*_r$,$Z^*_C$,$Z^*_D$ and $Z^*_a$ the sum of the completion times of the release date jobs, `$C$'-jobs, `$D$'-jobs and $a_j$-jobs respectively. Notice that  $mb$ is a lower bound on $Z^*_r$ since the jobs cannot start before $t=0$. In the same way, $mN_C^2C$ is a lower bound on $Z^*_D$. A lower bound on $Z^*_a$ is $\frac{7}{4} mb$, this is because if only the $a_j$-jobs were to be scheduled on $m$ machines, SPT-order would be optimal and would have 3 jobs on every machine. Suppose not, then there is a machine $i_1$ with 4 jobs or more. Then there is another machine $i_2$ with 2 or less jobs. Moving the first job $j$ on machine $i_1$ to machine $i_2$ would result in at least $3$ jobs finishing $p_j$ earlier and at most $2$ jobs finishing $p_j$ later at machine $i_2$. This leads to a contradiction that SPT is optimal. So we may assume each machine has exactly 3 jobs for finding the lower bound of $Z^*_a$. Then $ \sum_{i=1}^m (3a_{i1} + 2a_{i2} + a_{i3})$ is minimal if one chooses all $a_{i1}$ and $a_{i2}$ to be $\frac{b}{4}$, i.e. as small as possible. This implies $a_{i3} = \frac{b}{2}$, as $\sum_{a\in A} a = mb$ and $\frac{b}{4}\le a \le \frac{b}{2}$ for all $a \in A$. This leads to a total completion time and therefore a lower bound of $m \left(\frac{b}{4} + \frac{b}{2} + b\right) = \frac{7}{4} mb$ for $Z^*_a$. 

If we have a no-instance, we argue that in the optimal schedule $\exists a_j$ with completion time bigger than $b$. Assume not, then all first $m$ machines are processing $a_j$ jobs until $b$, since $\sum_{a \in A} a = mb$ . If there is a machine processing more than three $a_j$ jobs, it would have to be four $a_j$ jobs of length $\frac{b}{4}$, so that the $a_j$ jobs are all finished before or at $b$. But then, there is also a machine processing only two $a_j$ jobs of length $\frac{b}{2}$, otherwise another machine would have to finish its $a_j$ jobs after $b$. Switching a $\frac{b}{2}$ job with two $\frac{b}{4}$ jobs would then result in a smaller objective value. Hence, all machines are processing exactly three $a_j$ jobs with $\sum_{a \in A_i} a = b$ for all $1\le i\le m$. Then we would find a partition, leading to a contradiction. So there must be an $a_j$ finishing after $b$. Since $a_j \in \mathbb{N}$ for all $j$, we can find an $a_j$ with completion time $\ge b+1$. We distinguish 4 cases: 

\begin{itemize}
\item \textit{At least one `$C$'-job is scheduled before the $r_i$ job with the same resource.} Let $N_i$ be the number of `$C$'-jobs on machines $i$ and $m+i$ scheduled before the corresponding release job $r_i$. Then $r_i$ starts at $N_iC$ or later. The lower bound on $Z^*_r$ becomes $mb + C \cdot \sum_{i=1}^m N_i$. Then $ \sum_{i=1}^m \left((N_C-N_i) b + (C+N_C  C) \frac{N_C}{2}\right)$ is a lower bound of $Z^*_C$. Using the other lower bounds for $Z^*_a$ and $Z^*_D$, we obtain the following lower bound for $Z^*$: 
$$ mb + C \cdot \sum_{i=1}^m N_i+ \sum_{i=1}^m \left((N_C-N_i) b + (C+N_C  C) \frac{N_C}{2}\right) + mN_C^2C+ \frac{7}{4} mb.$$ Then $$Z^*-Z^+ \ge  (C-b)\sum_{i=1}^m N_i - mb - \frac{1}{4}mb >0,$$
using that $C = 8mb$ and $\sum_{i=1}^m N_i >0$.

\item \textit{All `$C$'-job are scheduled after the $r_i$ job with the same resource, but at least one `$C$'-job is scheduled on one of the last $m$ machines.} We split this up into two subcases: 
\begin{itemize}
\item At least one such `$C$'-job is scheduled \emph{before} a `$D$'-job on the same machine. We know all `$C$'-jobs start after $b$, hence $Z^*_C \ge m(N_Cb + (N_CC+C)\frac{N_C}{2})$. Then $Z^*_D \ge mN_C^2C + b + C$ since one `$D$'-job starts after $b+C$. Using the other lower bounds we get 
$$Z^* - Z^+ \ge b+C - mb - \frac{1}{4}mb > 0, $$
using that $C = 8mb$.
\item At least one such `$C$'-job is scheduled \emph{after} a `$D$'-job on the same machine. Let $i$ be the resource of one such `$C$'-job and $D_i$ the corresponding `$D$'-job. Then the `$C$'-job finishes at $N^2_CC+C$ or later, while any `$C$'-job scheduled \emph{not} after a $D$'-job would have a maximum completion time of $\sum_{j=1}^{3m} a_j + b + N_CC$, which is the sum of all processing times of jobs that could possibly be scheduled on machine $i$. Hence $Z^*_C \ge m(N_Cb + (N_CC+C)\frac{N_C}{2}) + (N_c^2C + C - (mb + b + N_CC))$. Using the lower bounds for $Z^*_r$, $Z^*_a$ and $Z^*_D$, we get
$$Z^*-Z^+ = N_C^2C+C -(mb + b + N_CC) - mb - \frac{1}{4} mb>0, $$
using that $C = 8mb$ and $N_C = 2mb$.
\end{itemize}

\item \textit{All `$C$'-job are scheduled after the $r_i$ job with the same resource, all `$C$'-jobs are scheduled on the first $m$ machines, but least one `$C$'-job is scheduled before an $a_j$ job on the same machine.} All `$C$'-jobs are scheduled after $b$, hence $Z^*_C \ge m(N_C b + (C+N_C C) \frac{N_C}{2})$. At least one $a_j$ has completion time bigger than $C +b$, while for any $a_j$ \emph{not} scheduled after an `$C$'-job has a completion time smaller or equal to $\sum_{j=1}^{3m} a_j +b = mb +b$, so $Z^*_a \ge \frac{7}{4} mb + (C + b - mb -b)$. Using the lower bounds for $Z^*_r$ and $Z^*_D$, we get
$$ Z^* - Z^+ \ge (C-mb) - mb -\frac{1}{4}mb > 0$$
using that $C = 8mb$.

\item \textit{All `$C$'-job are scheduled after the $r_i$ job with the same resource, all `$C$'-jobs are scheduled on the first $m$ machines and all `$C$'-job are scheduled after the $a_j$ jobs on the same machine.} Notice that the feasible solution in Figure \ref{Feasible Solution} is structured in a similar way. At least one machine $i$ should have an $a_j$ job with completion time at least $b+1$. So the sequence of `$C$'-jobs on machine i should start at $b+1$ or later. This means that $Z^*_C \ge m \left(N_C b + (C+N_C  C) \frac{N_C}{2} \right) + N_C$. Using the lower bounds on $Z^*_r$, $Z^*_a$ and $Z^*_D$, we get:
$$ Z^*-Z^+ \ge N_c - mb - \frac{1}{4}mb > 0, $$
using that $N_C = 2mb$.
\end{itemize}

So if the 3-PARTITION instance is a no-instance, $Z^*>Z^+$, hence the reduction is complete.
\end{proof}

We can use Theorem \ref{NPhardMr} to show that our original problem with unrelated machines instead of parallel machines is $\mathcal{NP}$-hard. This problem is actually the problem found in the lithography bays of the Europian semiconductor factories \cite{bitar2016memetic}.

\begin{corollary}$R|\textit{partition}| \sum_j C_j$ is $\mathcal{NP}$-hard in the strong sense.
\end{corollary}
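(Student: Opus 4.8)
The plan is to derive this corollary directly from Theorem~\ref{NPhardMr} by observing that the machine subset restriction $\mathcal{M}_r$ is a special case of the unrelated machines setting $R$. The key conceptual point is that on unrelated machines the processing time of a job depends on the machine: we have a matrix $(p_{ij})$ giving the time to process job $j$ on machine $i$. We can \emph{encode} any processing set restriction by making the forbidden machines prohibitively slow. Concretely, I would take the instance of $P|\textit{partition}, \mathcal{M}_r|\sum_j C_j$ constructed in the proof of Theorem~\ref{NPhardMr} and build an equivalent instance of $R|\textit{partition}|\sum_j C_j$ on the same $2m$ machines, with the same jobs and the same resource partition, where for each job $j$ belonging to resource $r$ we set $p_{ij}$ equal to its original processing time whenever $i \in \mathcal{M}_r$, and set $p_{ij}$ to some huge value $W$ (larger than any achievable objective value, e.g.\ $W > Z^{+}$ or $W$ polynomially bounded but exceeding the whole threshold) whenever $i \notin \mathcal{M}_r$.

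First I would argue that, with this choice of $W$, no schedule of cost at most $Z^{+}$ can place any job on a machine outside its permitted set, so optimal (or near-optimal) schedules for the $R$-instance correspond exactly to feasible schedules for the $\mathcal{M}_r$-instance. Since the bound $Z^+$ is itself polynomial in the input (this was guaranteed in the proof of Theorem~\ref{NPhardMr} by assuming $mb$ is polynomially bounded, which is legitimate because 3-PARTITION is strongly NP-hard), choosing $W$ just above $Z^{+}$ keeps all numbers polynomially bounded. Then the reduction preserves the yes/no answer: the $R$-instance admits a schedule of value at most $Z^{+}$ if and only if the $\mathcal{M}_r$-instance does, which by Theorem~\ref{NPhardMr} happens if and only if the underlying 3-PARTITION instance is a yes-instance.

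The remaining step is to confirm that the reduction runs in polynomial time and preserves \emph{strong} NP-hardness. Because all processing times, the threshold, and the penalty $W$ are bounded by a polynomial in $m$ and $b$, the constructed numbers are polynomially bounded in the input size, so the hardness is indeed in the strong sense. I would simply note that the construction copies the Theorem~\ref{NPhardMr} instance verbatim and only augments it with a machine-dependent cost matrix, so correctness is inherited immediately.

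The main obstacle, such as it is, is mostly one of care rather than depth: one must verify that the large penalty $W$ truly forbids the bad assignments \emph{and} stays polynomially bounded, so that strong NP-hardness is preserved rather than merely ordinary NP-hardness. A secondary subtlety is that in the unrelated-machines model the completion-time contributions are governed by the machine-dependent processing times, so one should double-check that on the \emph{permitted} machines the processing times coincide exactly with those of the original instance and hence the objective values match identically; this is immediate from the construction but worth stating explicitly.
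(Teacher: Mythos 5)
Your proposal is correct and follows essentially the same route as the paper: the paper likewise keeps the Theorem~\ref{NPhardMr} instance intact and sets $p_{ij}=p_j$ for $i\in\mathcal{M}_{r(j)}$ and $p_{ij}=T$ (the decision threshold) otherwise, so that any job placed on a forbidden machine already pushes the objective past the threshold. Your extra remarks about keeping $W$ polynomially bounded to preserve strong NP-hardness are a welcome explicit check of a point the paper leaves implicit.
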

\begin{proof}
We can reduce any decision variant instance $I_P$ of $P | partition, \mathcal{M}_r| \sum_j C_j$, asking whether there exists a feasible solution with total completion time smaller than $T$, to a decision variant instance $I_R$ of $R|partition| \sum_j C_j$ asking the same question. This is done by simply removing the processing set restrictions for resources and changing the processing times to:
$$p_{ij} = \begin{cases} 
p_j & \text{ if } i \in \mathcal{M}_{r(j)} \\
T& \text{ if } i \not\in \mathcal{M}_{r(j)} \\
\end{cases}$$ where $\mathcal{M}_{r(j)}$ denotes the machine restriction for $r(j)$, the resource of job $j$. Clearly, any feasible schedule for $I_P$ is also a feasible schedule for the mapped instance $I_R$ with the same total completion time. Hence if we have a yes-instance for $I_P$, we also have a yes-instance for $I_R$. However, if we have a no instance for $I_P$, all feasible solution for $I_P$ have a total completion time at least $T$. This means that all schedules for $I_R$ processing only $j$ on $i \in \mathcal{M}_{r(j)}$ for all $j$, also have total completion time at least $T$. However, any schedule processing at least one $j$ on an $i \not \in \mathcal{M}_{r(j)}$ also has a total completion time at least $T$, because of such a job $j$. Hence $I_R$ is also a no-instance. 
\end{proof}

\section{Unmovable resources}
Moving the resources can be a costly operation. Thus one might also consider the case were the resources are also fixed on a machine. We therefore consider the problem where every resource can only be used on one machine. We define \textit{unmovable} as an addition to the \textit{partition} constraint, were all jobs $j \in r^k$ have to be processed on the same machine.  

\begin{theorem} 
$P|\textit{partition},\textit{unmovable}, p_j=1|\sum_j C_j$ is $\mathcal{NP}$-hard.
\end{theorem}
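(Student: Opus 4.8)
The plan is to reduce from 3-PARTITION (Definition \ref{3part}), exploiting the fact that once resources are \emph{unmovable} and all $p_j=1$, the problem collapses to a balanced-load problem on $m$ machines. The first step is a structural simplification. Under \emph{unmovable}, every resource class $r^k$ must sit entirely on a single machine; since all jobs are unit length and, by Lemma \ref{idletime}, an optimal schedule has no idle time, a machine carrying $N_i$ jobs in total simply processes them with completion times $1,2,\dots,N_i$, regardless of how those jobs are grouped or ordered. Hence the contribution of machine $i$ to the objective is exactly $\tfrac12 N_i(N_i+1)$, and the total objective is
\[
\sum_{i=1}^m \tfrac12 N_i(N_i+1) \;=\; \tfrac12\Big(\sum_{i=1}^m N_i^2 + n\Big),
\]
where $n$ is the total number of jobs. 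Because $\sum_i N_i = n$ is fixed, minimizing the objective is equivalent to minimizing $\sum_i N_i^2$, i.e.\ to balancing the loads $N_i$ as evenly as possible.

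Next I would build the reduction. Given a 3-PARTITION instance with multiset $A$, $\sum_{a\in A} a = mb$ and $b/4 \le a \le b/2$, I create an instance with $m$ machines and, for each $a \in A$, one resource class consisting of $a$ unit jobs sharing a single (unmovable) resource. As in the proof of Theorem \ref{NPhardMr}, strong NP-hardness of 3-PARTITION lets me assume $mb$ is bounded by a polynomial in $m$, so the total number $mb$ of unit jobs is polynomial and the reduction runs in polynomial time. This is precisely why 3-PARTITION, rather than ordinary PARTITION, is needed: the jobs encode the integers in unary, so only polynomially bounded integers yield a polynomial-size instance. I would set the decision threshold to $Z^+ = m\cdot\tfrac12 b(b+1)$, the objective value attained when every machine carries load exactly $b$.

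The correctness proof then splits into two directions. For the forward direction, a yes-instance yields triples $A_1,\dots,A_m$ each summing to $b$; assigning the classes of $A_i$ to machine $i$ makes every $N_i = b$ and gives objective exactly $Z^+$. The backward direction is the main obstacle and rests on the strict convexity of $f(x)=\tfrac12 x(x+1)$: since $\sum_i N_i = mb$ and the common average $mb/m = b$ is an integer, $\sum_i f(N_i) \ge m\,f(b) = Z^+$, with equality if and only if every $N_i = b$. Thus an objective of at most $Z^+$ forces all loads to equal $b$.

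The remaining work, and the place to be careful, is the cardinality argument that turns "all loads equal $b$" back into a 3-PARTITION. On a machine with load exactly $b$, the assigned class sizes are integers in $[b/4,\,b/2]$ summing to $b$, so there must be at least three and at most three of them; hence each machine receives exactly three classes whose sizes sum to $b$, recovering a valid partition $(A_1,\dots,A_m)$. The two delicate points to nail down are the \emph{strictness} of the convexity inequality (so that any unbalanced assignment has objective strictly above $Z^+$, making the equivalence with the threshold exact) and this three-element counting step, which is exactly where the bound $b/4 \le a \le b/2$ is used. Together these establish that the scheduling instance admits a schedule of value at most $Z^+$ if and only if the 3-PARTITION instance is a yes-instance, completing the reduction.
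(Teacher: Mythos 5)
Your reduction is essentially the paper's: one resource class of $a$ unit jobs for each integer $a\in A$, $m$ machines, and threshold $\tfrac{m}{2}b(b+1)$. Your observation that the objective equals $\tfrac12(\sum_i N_i^2+n)$ and your strict-convexity argument are in fact a welcome tightening of the paper's bare assertion that an uneven distribution costs strictly more. However, the final counting step contains a genuine error. Under the paper's Definition~\ref{3part} the bounds are non-strict, $b/4\le a\le b/2$, so a machine of load exactly $b$ may carry $k$ classes with $k\cdot b/4\le b\le k\cdot b/2$, i.e.\ any $k\in\{2,3,4\}$: four classes all of size exactly $b/4$, or two classes both of size exactly $b/2$, are perfectly possible. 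Your claim that there must be ``at least three and at most three'' of them is therefore false as stated, and the balanced assignment you recover need not be a partition into triples.

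This is precisely the degenerate case the paper's proof addresses in its closing remark: if some machine holds four classes (necessarily all of size $b/4$), then, since there are $3m$ classes in total, some other machine holds only two (necessarily both of size $b/2$), and exchanging one $b/2$-class for two $b/4$-classes turns both into three-class machines of load $b$; iterating yields a genuine 3-element partition. You need either this repair step or to work with the strict-inequality variant of 3-PARTITION, where exactly-three per machine is automatic. A second, minor point: you invoke Lemma~\ref{idletime} for the no-idle-time claim, but its proof uses untangling, which swaps suffixes across machines and need not preserve \textit{unmovable}; here you do not need the lemma at all, since once classes are assigned to machines, each machine can trivially run its jobs back to back without any resource conflict.
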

\begin{proof}
We give a polynomial time reduction from the 3-Partition problem as defined in Definition \ref{3part}. 
We introduce in $P|\textit{partition},\textit{unmovable}, p_j=1|\sum_j C_j$ $m$ machines and a number of jobs equal to $n=\sum_{a \in A} a$ and a number of resources equal to $3m$, where $M$ is a large number. For every element of $a \in A$, we associate a number of jobs equal to $a$ sharing the same resource. If we have a yes-instance of 3-Partition, then, $\forall i \in \{ 1, \ldots, m\}$, we can schedule all jobs associated with $a \in A_i$ to machine $i$. All machines will then be busy processing until time $b$. This will give us an objective value $\tfrac{m}{2} b (b+1)$. If we have a no-instance of 3-Partition, we cannot distribute the jobs evenly over the machines and thus the objective value will be greater than $\tfrac{m}{2} b (b+1)$. Hence, there exists a solution to 3-Partition if and only if $P|\textit{partition},\textit{unmovable}, p_j=1|\sum_j C_j$ as constructed above has an objective value of $\tfrac{1}{2} mb (b+1)$.\\
Note that, one might have a yes-instance of $P|\textit{partition},\textit{unmovable}, p_j=1|\sum_j C_j$, where on one machine there are 4 resources being used. If this is the case, these resources all have $b/4$ associated jobs and since it is a yes-instance, there also must be a machine using only 2 resources with $b/2$ associated jobs. An easy switch of the last $b/2$ units of processing of these two machine, will turn this also in a yes-instance for 3-Partition. 
\end{proof}

\section{Two resources per job}
Because the problem was motivated by the scheduling problem found in the lithography bays of the semi-conductor industry, we are mainly interested in the case that there is only one resource per job. However, one might also wonder what happens if there is more than one resource needed per job. We will therefore continue by looking at instances with at most $q$ resources per job. We introduce \textit{partition}$(q)$ for the $\beta$ field of the scheduling problem. If \textit{partition}$(q)$ is in the $\beta$ field, there is a collection of subsets $R = \{r^1,\ldots,r^R\}$ with $r^k \subseteq J$, where every job is contained in at most $q$ subsets. If there exist an $r^k \in R$ such that $j,j' \in r^k$, $j$ and $j'$ cannot be processed at the same time. \\
The problem $P|\textit{partition}$(q)$, p_j=1|\sum_j C_j$ is a special case of $ P|\text{res}\cdots, \text{types} = \mathcal{R}, p_i<p|f$ with $f \in \{ \sum_j w_j C_j, \sum_j T_j, \sum_j U_j \}$. Here, $s$ is the number of resources and there are $\mathcal{R}$ types of jobs. A type of a job $j$ is defined as the tuple $(p_j,\mathcal{R}_1(j),\ldots,\mathcal{R}_s(j))$, where $\mathcal{R}_u(j)$ is the amount of resource $u$ required by job $j$. Note that, in our case, $|R|=\mathcal{R}=s$. \citet{brucker1996polynomial} show that it can be solved in $O(\mathcal{R}(p+s)n^{\mathcal{R}p}+ \mathcal{R}^2 pn^{\mathcal{R}(p+2)})$, resulting in the following corollary.

\begin{corollary}
$P|\textit{partition}(q), p_j=1|\sum_j C_j$ is polynomially solvable for every $q\in \mathbb{Z}$, if the number of resources, $|R|$, is bounded. 
\end{corollary}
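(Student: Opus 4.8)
The plan is to read off the result from the algorithm of \citet{brucker1996polynomial} quoted just above, once two things have been checked: that our instance is genuinely an instance of their problem, and that their running-time bound degenerates to a polynomial in $n$ under the hypothesis that $|R|$ is bounded.

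First I would spell out the embedding. Starting from an instance of $P|\textit{partition}(q), p_j=1|\sum_j C_j$, I introduce one resource of unit capacity for each subset $r^k \in R$, so that the number of resources is $s=|R|$. For a job $j$ I set $\mathcal{R}_u(j)=1$ when $j \in r^u$ and $\mathcal{R}_u(j)=0$ otherwise; since each job lies in at most $q$ of the subsets, it requires at most $q$ resources, and the unit capacity of each resource encodes exactly the constraint that two jobs sharing a subset may not run simultaneously. As $\sum_j C_j$ is the case $w_j=1$ of $\sum_j w_j C_j$, which is one of the admissible objectives $f$, and $p_j=1<2$ lets us take the processing-time bound to be the constant $p=2$, the instance is exactly an instance of $P|\text{res}\cdots,\, \text{types}=\mathcal{R},\, p_i<p\,|\sum_j w_j C_j$.

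Next I would bound every parameter occurring in the running time $O\!\left(\mathcal{R}(p+s)n^{\mathcal{R}p}+\mathcal{R}^2 pn^{\mathcal{R}(p+2)}\right)$. Here $s=|R|$ and $p=2$ are constants once $|R|$ is bounded, so it only remains to control the number of types $\mathcal{R}$, and this is the one step I would treat as the real content of the argument. A type is a tuple $(p_j,\mathcal{R}_1(j),\ldots,\mathcal{R}_s(j))$; with the single processing time $p_j=1$ and each $\mathcal{R}_u(j)\in\{0,1\}$, there are at most $2^{|R|}$ distinct tuples, so $\mathcal{R}\le 2^{|R|}$ is again a constant. The paper's remark $|R|=\mathcal{R}=s$ is exact only for ordinary \textit{partition}, i.e.\ $q=1$, where each job carries exactly one resource and hence realises one of precisely $|R|$ types; for general $q$ the weaker bound $\mathcal{R}\le 2^{|R|}$ is what is needed, and it still keeps $\mathcal{R}$ constant. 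With $\mathcal{R}$, $s$, and $p$ all bounded by constants, the exponents $\mathcal{R}p$ and $\mathcal{R}(p+2)$ are constants, the factors $n^{\mathcal{R}p}$ and $n^{\mathcal{R}(p+2)}$ are polynomial in $n$, the prefactors are constants, and hence the cited algorithm runs in time polynomial in $n$, which is the claim.
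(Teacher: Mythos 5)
Your proof takes essentially the same route as the paper: the statement is obtained by embedding the instance into the $P|\text{res}\cdots,\ \text{types}=\mathcal{R},\ p_i<p\,|\sum_j w_j C_j$ framework of Brucker et al.\ and observing that every parameter in their running-time bound is a constant once $|R|$ is bounded, so the bound is polynomial in $n$. You additionally correct a small inaccuracy in the paper's setup: the identity $|R|=\mathcal{R}=s$ asserted there is exact only for $q=1$, whereas for $q\ge 2$ the number of types is only bounded by $2^{|R|}$ (or $\sum_{i\le q}\binom{|R|}{i}$), which is still a constant under the hypothesis and is exactly what the argument needs.
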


However, we will now show that the problem becomes $\mathcal{NP}$-hard when the number of resources is not bounded, even with $q=2$. 

\begin{theorem}\label{Th_qis2}
$P|\textit{partition}(q), p_j=1|\sum_j C_j$ is $\mathcal{NP}$-hard for every $q \geq 2$, if the number of machines $m$ and resources $|R|$ are unbounded.
\end{theorem}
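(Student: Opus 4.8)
The plan is to reduce from the NP-complete problem of deciding whether a cubic ($3$-regular) graph is $3$-edge-colourable (the classical result of Holyer), exploiting the fact that with $q=2$ and $p_j=1$ the problem is, once the machine bound is made slack, exactly a minimum-sum edge-colouring problem. Given a cubic graph $G=(V,E)$ with $n=|V|$ vertices (so $n$ is even and $|E|=\tfrac32 n$), I would build an instance of $P|\textit{partition}(2),p_j=1|\sum_j C_j$ with $|R|=n$ resources, one per vertex, and one unit-length job $j_e$ per edge $e=uv$, letting $j_e$ use exactly the two resources $u$ and $v$. I take the number of machines $m\ge |E|$, so the machine bound never binds. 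Two jobs conflict precisely when their edges share an endpoint, so a set of jobs can be processed simultaneously if and only if the corresponding edges form a matching of $G$.

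By Lemma \ref{idletime} and $p_j=1$, I may assume integer start and completion times and no idle time, so a schedule is described by a partition of the jobs into \emph{rounds} $M_1,M_2,\dots$, where $M_t$ is the set of jobs completing at time $t$; each $M_t$ is a matching of $G$ and $C_{j_e}=t$ for $e\in M_t$. Writing $n_t=|M_t|$, the objective is $\sum_t t\,n_t$ with $\sum_t n_t=|E|=\tfrac32 n$. Since a matching in an $n$-vertex graph has at most $\tfrac n2$ edges, every $n_t\le \tfrac n2$. Minimising $\sum_t t\,n_t$ subject to $\sum_t n_t=\tfrac32 n$ and $n_t\le\tfrac n2$ is a pure front-loading problem: its value is at least $\tfrac n2(1+2+3)=3n$, and this bound is attained \emph{only} when $n_1=n_2=n_3=\tfrac n2$ and $n_t=0$ for $t\ge 4$, because whenever $n_1<\tfrac n2$ some round $t\ge 2$ carries load that can be moved into round $1$, strictly decreasing the objective; iterating forces $n_1=n_2=n_3=\tfrac n2$.

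This fixes the decision threshold $T=3n$. If $G$ is $3$-edge-colourable, each of its three colour classes is a perfect matching of size $\tfrac n2$; scheduling colour $c$ in round $c$ gives objective exactly $3n\le T$. Conversely, an objective of $3n$ forces, by the uniqueness above, three disjoint perfect matchings covering $E$, i.e.\ a proper $3$-edge-colouring; if $G$ is not $3$-edge-colourable the optimum is strictly larger, hence at least $3n+1$ by integrality. Thus the optimum is $\le 3n$ if and only if $G$ is $3$-edge-colourable, completing the reduction for $q=2$; and since \textit{partition}$(2)$ is a special case of \textit{partition}$(q)$ for every $q\ge 2$, the result follows for all such $q$ (with all quantities polynomially bounded, the hardness is in fact strong). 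I expect the main obstacle to be the tightness of this characterisation: one must confirm that the max-matching cap $\tfrac n2$ is the only structural constraint affecting the optimum (so that no cleverer packing of non-perfect matchings beats $3n$), and that the separation between class-$1$ and class-$2$ graphs is a genuine integer gap that cannot be erased by spreading the load over extra rounds.
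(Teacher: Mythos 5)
Your reduction is correct in substance but takes a genuinely different route from the paper's. Both proofs start from Holyer's edge-colouring hardness with one unit job per edge and one resource per vertex, but the mechanisms for forcing the threshold differ. The paper works with an arbitrary graph of maximum degree $\Delta$, sets the number of machines equal to $|E|$, and adds $(\Delta-1)|E|$ resource-free dummy jobs so that exactly $\Delta|E|$ unit jobs must be packed onto $|E|$ machines; the objective then equals $\tfrac12\Delta(\Delta+1)|E|$ precisely when every machine is saturated on $[0,\Delta]$, i.e.\ when the edge-jobs decompose into $\Delta$ matchings. You instead restrict to cubic graphs, use no dummy jobs, make the machine bound slack, and let the maximum-matching cap $n/2$ be the binding constraint, with a front-loading argument (equivalently $\sum_t t\,n_t=\sum_{t\ge1}N_{\ge t}$ with $N_{\ge2}\ge n$, $N_{\ge3}\ge n/2$) showing the optimum is $3n$ exactly when rounds $1,2,3$ are perfect matchings. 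Your worry about ``cleverer packings'' is unfounded: the lower bound uses only $n_t\le n/2$ and $\sum_t n_t=3n/2$, and the equality case is unique, so the argument is tight. The paper's version buys generality in $\Delta$ and avoids any appeal to regularity; yours avoids dummy jobs and makes the ``only if'' direction a clean counting statement.

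The one point you should not gloss over is the reduction of an arbitrary optimal schedule to integral ``rounds.'' You invoke Lemma \ref{idletime}, but that lemma and its untangling proof are stated for \textit{partition}, i.e.\ one resource per job; with two resources a job can form tight pairs with predecessors on two different machines, and the suffix-swapping argument does not transfer verbatim. Since you take $m\ge|E|$, a direct argument is easy and should replace the citation: in an optimal schedule any job whose start time is positive and is not exactly the completion time of a conflicting job can be shifted left, strictly decreasing the objective; hence every job starts at $0$ or at the (integral, by induction on start-time order) completion time of a conflicting predecessor, so all completion times are integers and jobs finishing at time $t$ form a matching. With that repair, and the observation that \textit{partition}$(2)$ embeds into \textit{partition}$(q)$ for all $q\ge2$, your proof is complete.
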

\begin{proof}
We will prove this by a reduction from edge coloring. In the edge coloring problem, one assigns colors (or labels) to the edges of a graph $G=(V,E)$, such that no two incident edges have the same color. Let $\Delta$ be the maximum node degree in the graph $G$, then \citet{holyer1981np} shows that it is $\mathcal{NP}$-hard to decide for an arbitrary graph $G$ whether or not it can be colored using only $\Delta$ colors. \\
We can reduce this problem to $P|\textit{partition}(2), p_j=1|\sum_j C_j$ as follows. Suppose we are given graph $G=(V,E)$ with $|E|=m$. Take the number of machines equal to $m$. Introduce a resource for every node $u\in V$, $|R|=|V|$. We also introduce a job (with $p_j=1$) for every edge $e=\{u,v\}$ and these jobs require the resources that are associated with nodes it connects (i.e. $u$ and $v$). Thus every resource will be used at most $\Delta$ times and every job uses exactly 2 resources. Lastly, we introduce $(\Delta-1) n$ dummy jobs (with $p_j=1$), that do no require a resource. Figure \ref{example2resred} shows an example of the reduction.\\

\begin{figure}[ht!]
\centering
	\subfigure[Graph $G=(V,E)$]{
    \begin{tikzpicture}[scale = 0.65]
    	\draw [fill=black] (0,0) circle (2.5pt);
		\node at (0,-0.5) {\scriptsize $1$};
        \draw [fill=black] (-2,2) circle (2.5pt);
		\node at (-2.5,2) {\scriptsize $2$};
        \draw [fill=black] (2,2) circle (2.5pt);
		\node at (2.5,2) {\scriptsize $3$};
        \draw [fill=black] (0,4) circle (2.5pt);
		\node at (0,4.5) {\scriptsize $4$};
	
		\draw (0,0.)-- (-2,2);
        \draw (0,0.)-- (0,4);
        \draw (0,0)-- (2,2);
        \draw (-2,2)-- (0,4);
        \draw (2,2)-- (0,4);
        
	\end{tikzpicture}
    } \hspace{12pt}
	\subfigure[Optimal schedule for instance the instance. Dashed lined jobs represent dummy jobs and number represent the resources ]{ 
    \begin{tikzpicture}[scale = 0.6]
    \draw [fill=white,draw = white] (9,5) circle (2.5pt);
    
    \fill[fill=lightgray,  draw = black] (0,0) rectangle (2,1);
    \fill[fill=lightgray,  draw = black] (2,0) rectangle (4,1);
    \fill[fill=lightgray,  draw = black] (4,0) rectangle (6,1);
    \fill[fill=lightgray,  draw = black] (0,1) rectangle (2,2);
    \fill[fill=lightgray,  draw = black] (2,1) rectangle (4,2);
    \fill[fill=lightgray,  draw = black] (4,1) rectangle (6,2);
    \fill[fill=lightgray,  draw = black] (0,2) rectangle (2,3);
    \fill[fill=lightgray,  draw = black] (2,2) rectangle (4,3);
    \fill[fill=lightgray,  draw = black] (4,2) rectangle (6,3);
    \fill[fill=lightgray,  draw = black] (4,3) rectangle (6,4);
	\fill[fill =white, draw = black] (0,4) rectangle (2,5);
	\node at (1,4.5) {\footnotesize $1 \ \ 2$};
    \fill[fill =white, draw = black] (2,4) rectangle (4,5);
	\node at (3,4.5) {\footnotesize $1 \ \ 3$};
    \fill[fill =white, draw = black] (4,4) rectangle (6,5);
	\node at (5,4.5) {\footnotesize $2 \ \ 3$};
    \fill[fill =white, draw = black] (0,3) rectangle (2,4);
	\node at (1,3.5) {\footnotesize $3 \ \ 4$};
    \fill[fill =white, draw = black] (2,3) rectangle (4,4);
	\node at (3,3.5) {\footnotesize $2 \ \ 4$};
	\draw[decorate,decoration={brace,raise=4pt,amplitude=8pt}]
    (0,0)--(0,5) ;
    \node at (7.1,-0.5) {\scriptsize time};
    \node at (6,-0.5) {\scriptsize $\Delta$};
    \draw[->] (0,0) -- (7.1,0);
    \node[rotate=90] at (-1.5,2.5)  {$n$ machines};
	\end{tikzpicture}    
	}
    
\caption{Example of the reduction from edge coloring to $P|\textit{partition}(2), p_j=1|\sum_j C_j$ with a graph with $\Delta =3$}
\label{example2resred}
\end{figure}
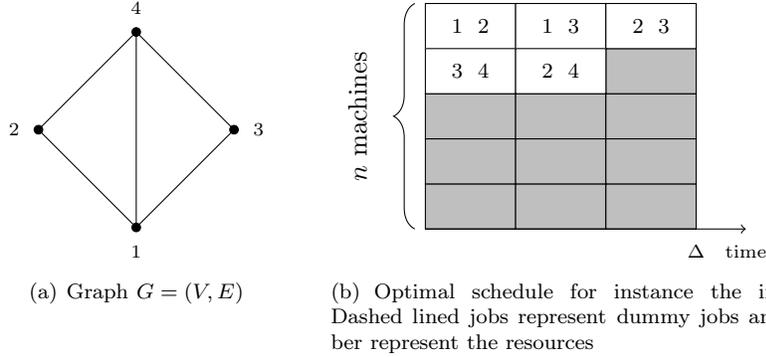

We claim that there exists an edge coloring of graph $G$ using only $\Delta$ colors if and only if the instance of $P|\textit{partition}(2), p_j=1|\sum_j C_j$ has an optimal value of $\tfrac{1}{2}\Delta(\Delta+1)m$. Suppose we are given a solution to edge coloring problem which uses $\Delta$ colors, then we can put each color on a different time slot. So all jobs associated with an edge of the first color will be put on machines in the first time slot. We fill up all unused machine time until time $\Delta$ with dummy jobs. Since jobs only share a resource if they were incident in the graph, we will not have any resource conflict and all machines will be filled with jobs until time $\Delta$, thus resulting in an objective value of $\tfrac{1}{2}\Delta(\Delta+1)m$.\\
Suppose we have a solution of the above instance of $P|\textit{partition}(2), p_j=1|\sum_j C_j$ with objective value $\tfrac{1}{2}\Delta(\Delta+1)m$. Then in each time slot we look for the jobs associated with a node and give them the same color. Since the objective value is $\tfrac{1}{2}\Delta(\Delta+1)m$ there are no jobs after time $\Delta$, hence there are only $\Delta$ colors. Since all jobs associated with incident edges share a resource, no two incident edges will share the same color and hence we have found an edge coloring using $\Delta$ colors. 
\end{proof}


\section{Conclusion}
In this paper, we considered the problem of minimizing the total completion time while scheduling jobs that each use exactly one resource, $P| \textit{partition}|\sum_j C_j$. Although the complexity of $P| \textit{partition}|\sum_j C_j$ remains unclear, we saw that similar problems such as $P|\textit{partition}, \mathcal{M}_r|\sum_j C_j$, $P| \textit{partition}(2), p_j=1|\sum_j C_j$ and $P|\textit{partition}, \textit{unmovable}, p_j=1|\sum_j C_j$ are $\mathcal{NP}$-hard. Therefore, we conjecture that $P| \textit{partition}|\sum_j C_j$ is $\mathcal{NP}$-hard as well.\\
The problem $P| \textit{partition}|\sum_j C_j$ always has an optimal solution where jobs sharing the same resource are ordered by the processing time. Such an optimal solution might even be more structured. For example, it remains open whether or not there is always an optimal solution that yields the SPT order property on each machine for all jobs on that machine. \\
We showed that the SPT-\textit{available} rule gives a $\left(2-\tfrac{1}{m}\right)$-approximation. This bound may not be tight. There is a lower bound of $\tfrac{4}{3}$ on the approximation factor. Closing this gap is another interesting open problem, as well as designing other approximation algorithms with better approximation ratios.

\bibliography{biblio}

\end{document}